\active \gdef@{\mkern1mu}}
\def\R{{\mathbb{R}}}
\def\C{{\mathbb{C}}}
\def\Ir{{I_r}}
\def\Inr{{I_{n-r}}}
\def\X{{\mathcal{X}}}
\def\CH{{\mathcal{H}}}
\def\tr{{\kern-1pt{\scriptscriptstyle{\top}}\kern-1pt}}
\def\Sys{\mathcal{G}}
\def\Sysred{\Sys_r}
\def\Systrunc{\wt{\Sys}}
\def\trace{{\textrm{t{\kern0.4pt}r}}}
\def\wt#1{\widetilde{#1}}
\def\wh#1{\widehat{#1}}
\def\imunit{{\mathrm{i}}}
\def\hot#1{\textcolor{red}{#1}}
\newtheorem{theorem}{Theorem}[section]
\newtheorem{corollary}{Corollary}[section]
\newtheorem{lemma}{Lemma}[section]
\newtheorem{proposition}{Proposition}[section]
\newtheorem{remark}{Remark}[section]
\newtheorem{definition}{Definition}[section]
\newtheorem{example}{Example}[section]
\newtheorem{proof}{Proof}[section]
\begin{document}

\title{On the balanced truncation error bound and sign parameters from arrowhead realizations}
  
\author[$\dagger$]{Sean Reiter}
\affil[$\dagger$]{
  Department of Mathematics, Virginia Tech,
  Blacksburg, VA 24061, USA.\authorcr
  \email{seanr7@vt.edu}, \orcid{0000-0002-7510-1530}
}

\author[$\ast$]{Tobias Damm}
\affil[$\ast$]{%
  Department of Mathematics, University of Kaiserslautern-Landau (RPTU),
  67653 Kaiserslautern, Germany.\authorcr
  \email{damm@mathematik.uni-kl.de}, \orcid{0000-0002-8550-8174}
}

\author[$\ddagger$]{Mark Embree}
\affil[$\ddagger$]{%
  Department of Mathematics and Division of Computational Modeling and Data
  Analytics, Academy of Data Science, Virginia Tech,
  Blacksburg, VA 24061, USA.\authorcr
  \email{embree@vt.edu}, \orcid{0000-0001-6456-9317}
}

\author[$\ddagger$]{Serkan Gugercin}
\affil[$\ddagger$]{%
  Department of Mathematics and Division of Computational Modeling and Data
  Analytics, Academy of Data Science, Virginia Tech,
  Blacksburg, VA 24061, USA.\authorcr
  \email{gugercin@vt.edu}, \orcid{0000-0003-4564-5999}
}

\shorttitle{On the balanced truncation error bound}
\shortauthor{S. Reiter, T. Damm, S. Gugercin, M. Embree}
\shortdate{\today}
  
\keywords{%
model reduction, balanced truncation, error bound, arrowhead matrix, sign parameters, sign symmetry, power systems
}

\msc{%
93B10, 93B11, 93C05
}

\abstract{Balanced truncation and singular perturbation approximation for linear dynamical systems yield reduced-order models that satisfy a well-known error bound involving the Hankel singular values.
We show that this bound holds with equality for single-input, single-output systems, if the sign parameters corresponding to the truncated Hankel singular values are all equal. 
These signs are determined by a generalized state-space symmetry property of the corresponding linear model.
For a special class of systems having arrowhead realizations, the signs can be determined directly from the off-diagonal entries of the corresponding arrowhead matrix.
We describe how such arrowhead systems arise naturally in certain applications of network modeling, and illustrate these results with a power system model that motivated this study.
}
\novelty{}

\maketitle

\section{Introduction}\label{sec:introduction}
Balanced truncation~\cite{mullis1976synthesis,moore1981principal} is a powerful method for reducing linear time-invariant (LTI) dynamical systems, yielding reduced-order models that are guaranteed to be asymptotically stable and satisfy a simple \emph{a priori} $\mathcal{H}_\infty$ error bound involving the Hankel singular values.
The bound is known to hold with equality under certain conditions, such as when the full-order system is state-space symmetric~\cite{LST98}. 

This paper makes two primary contributions. 
First, we show that the balanced truncation error bound holds with equality for a larger family of single-input, single-output (SISO) systems, those for which the truncated part of the model satisfies a particular state-space symmetry in its canonical balanced form. 
This result can also be obtained using results from~\cite{6949058}; see Remark~\ref{remark:timoresult} for details.
The \textit{sign parameters} associated with the Hankel singular values provide the main tool for determining this generalized state-space symmetry of the balanced model. {The signs are a state-space invariant with respect to the underlying system.}
{Secondly, we show that these sign parameters are determined by \emph{any} realization of the full-order model that satisfies this generalized state-space symmetry condition.}
As a consequence of this result, we show how to derive these sign parameters for a special class of systems having \emph{arrowhead realizations}. In this case, the signs can be read off directly from the off-diagonal entries of the corresponding arrowhead matrix.
Systems with arrowhead realizations arise in network models
where the internal state variables interact as in Figure~\ref{figure:arrowheadnetwork}, with the input $u(t)$ and output $y(t)$  directly interacting only with the first state variable, $x_1(t)$. 
We derive sufficient conditions that guarantee a general linear system has such an arrowhead form.

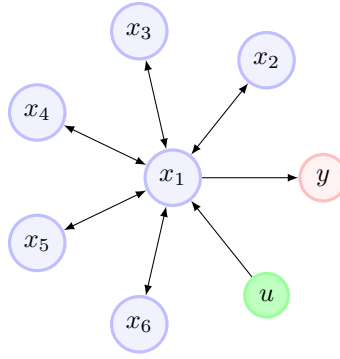
\begin{figure}[b!]
\center
\begin{tikzpicture}[roundnode/.style={circle, draw=blue!25, fill=blue!5, very thick, minimum size=5mm},
outputnode/.style={circle, draw=red!25, fill=red!5, very thick, minimum size=5mm},
inputnode/.style={circle, draw=green!40, fill=green!25, very thick, minimum size=5mm}]
\def \n {6}
\def \radius {2cm}
\def \margin {8}
\node[roundnode] at (360:0) (hub) {$x_1$};
\foreach \s in {2,...,\n}{
    \node[roundnode] at  ({360/(\n+1) * (\s - 1)}:\radius) (\s) {$x_{\s}$};
    \draw[<->, >=latex] (hub)--(\s);
}
\node[inputnode] at  ({360/(\n+1) * (7 - 1)}:\radius) (7) {$u$};
\draw[->, >=latex] (7)--(hub);
\node[outputnode] at  ({360/(\n+1) * (8 - 1)}:\radius) (8) {$y$};
\draw[->, >=latex] (hub)--(8);
\end{tikzpicture}
\caption{\label{figure:arrowheadnetwork} An arrowhead network with $n=6$, with input $u$ and output $y$ restricted to state $x_1$.}
\end{figure}

We first encountered this error bound phenomenon while studying a model from power system dynamics. We observed numerically that the balanced truncation error bound for these models was tight, and  that their canonical balanced realization always exhibited a particular sign symmetry in which the truncated part of the model has a slight generalization of this state-space symmetric form.
We then noted that this sign symmetry could be derived from a natural state-space realization in which the model has an arrowhead form, as shown in Figure~\ref{figure:arrowheadnetwork}.

The rest of the paper proceeds as follows. Section~\ref{sec:balancedtrunc} reviews the necessary details of balanced truncation model reduction, and describes a generalization of state-space symmetry based on the canonical balanced form of a linear system. We also establish the connection between this symmetry and the \textit{sign parameters} corresponding to the system's Hankel singular values.
Section~\ref{sec:result3} contains the first main result, showing that the balanced truncation error bound is tight when the \emph{truncated} system exhibits a slight generalization of state-space symmetry.
This symmetry condition holds when the sign parameters corresponding to the truncated Hankel singular values are equal.
In Section~\ref{sec:gen_sign_symm} we show how to derive a system's sign parameters (and thus the sign symmetry of the associated canonical balanced form) from \emph{any} state-space realization $\Sys$ that satisfies this generalized sign-symmetry condition.
In Section~\ref{sec:specialcase}, we study the arrowhead systems that motivated this work, and apply the result of Section~\ref{sec:gen_sign_symm} to a special class of arrowhead systems. 
Numerical examples follow each result for illustrative purposes, including the power system model that motivated our investigation. Section~\ref{sec:conc} concludes the paper.

\section{Balanced truncation model reduction}
\label{sec:balancedtrunc}

Consider the SISO LTI dynamical system
\begin{align}\label{eq:system}
\Sys:
\left\{
\begin{array}{rcl}x'(t)  & = &  A@x(t) + 
b@u(t) \\
y(t) &= & c@x(t) +du(t),
\end{array}
\right.
\end{align}
having the  transfer function
$$G(s) = c@(sI-A)^{-1}b+d,$$
where $A\in\mathbb{R}^{n\times n}$, $b\in\mathbb{R}^{n\times 1}$, $c\in\mathbb{R}^{1\times n}$, and $d\in\mathbb{R}$.\ \ Throughout we assume that
$\mathcal{G}$ is asymptotically stable, i.e., the eigenvalues of $A$ have negative real parts, and that $\mathcal{G}$ is minimal, i.e., it is reachable and observable. 
The $\mathcal{H}_\infty$-norm of {the LTI system} $\mathcal{G}$ is defined as {the $\CH_\infty$ norm of its transfer function $G(s)$~\cite[Sec.~5.1.3]{Ant05}, i.e.}
\begin{align}
\label{eq:Hinfnorm} {\|G\|_{\CH_\infty}}= \sup_{\omega\in\mathbb{R}}  | G(\imunit@\omega) |,
\end{align}
where {$\imunit\in\C$ denotes the imaginary unit}.
The Lyapunov equations
\begin{align} \label{eq:lyap}
    A\mathcal{P}+\mathcal{P}A^{\tr} + b@b^{\tr} =0\quad\mbox{and}\quad
    A^{\tr}\mathcal{Q}+\mathcal{Q}A + c^{\tr}  c=0
\end{align}
have unique positive definite solutions $\mathcal{P}$ and $\mathcal{Q}$
called the reachability and observability Gramians of $\Sys$.\ \ 
We say the system $\Sys$ is \textit{principal-axis balanced} if $\mathcal{P}=\mathcal{Q}=\Sigma$, where 
\begin{align} \label{eq:hsvs}
\Sigma=\textrm{diag}(\sigma_1 I_{m_1}, \dots,\sigma_q I_{m_q} ),
\end{align}
with $\sigma_1>\sigma_2>\dots>\sigma_q>0$ and $m_1+\cdots+m_q=n$.
Here $\sigma_1,\ldots,\sigma_q$ are the \textit{Hankel singular values} of $\mathcal{G}$. 
For brevity, throughout we use the term \textit{balanced} as shorthand for \emph{principal-axis balanced}.
The state-space representation in balanced coordinates is called a \textit{balanced realization} of $\Sys$.\ \ In these balanced coordinates, states that are difficult to reach are also difficult to observe; such states are associated with the smallest Hankel singular values. Balanced truncation reduces the model by removing these components of the state space.
This reduction can be expressed compactly by writing the system in block form, as given in the following theorem.
The asymptotic stability result is due to
Pernebo and Silverman~\cite{PerS82} and the error bound~\eqref{eq:bound} is due to Enns~\cite{enns1984model}.
\begin{theorem} 
Let $\Sys$ be an order-$n$ minimal and asymptotically stable dynamical system having the balanced realization
\begin{align} \label{eq:partitionedbal}
\left[\begin{array}{c|c}
A & b \\\hline
c & d
\end{array}
\right]=\left[
\begin{array}{cc|c}
A_{11}& A_{12}  & b_1\\
A_{21} & A_{22} & b_2\\
\hline
c_1 & c_2 &d
\end{array}
\right],
 \end{align}
where the state space matrices are partitioned according to the system Gramian $\Sigma = \textrm{diag}(\Sigma_1,\Sigma_2)$, with
\begin{align}
    \label{eq:hsv1}\Sigma_1 &= \textrm{diag}(\sigma_1 I_{m_1},\dots,\sigma_k I_{m_k}), \\
    \label{eq:hsv2a}\Sigma_2 &= \textrm{diag}(\sigma_{k+1} I_{m_{k+1}},\dots,\sigma_q I_{m_q}),
\end{align}
where $r:=m_1+\cdots+m_k$. Then the $r$th order reduced model obtained via balanced truncation,
\begin{align} \label{eq:redsys}
\Sysred:
\left\{
\begin{array}{rcl}x_r'(t)&=& A_{11}@x_r(t) + 
b_1@u(t)\\
y_r(t) &=& c_1@x_r(t) +d@u(t),
\end{array}
\right.
\end{align}
having the transfer function
$$G_r(s) = c_1(s\Ir-A_{11})^{-1}b_1 +d,$$
is asymptotically stable and satisfies the error bound
\begin{align} \label{eq:bound}
\|{G}-{G_r}\|_{\mathcal{H}_\infty}
\leq 2@ (\sigma_{k+1}+\dots +\sigma_q).
\end{align}
\end{theorem}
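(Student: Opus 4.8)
The plan is to prove the two halves of the statement separately, in each case exploiting the block structure that balancing imposes on the Lyapunov equations~\eqref{eq:lyap}. For the asymptotic stability of $\Sysred$ (the Pernebo--Silverman part), I would expand~\eqref{eq:lyap} in the partitioned coordinates~\eqref{eq:partitionedbal}; the $(1,1)$ blocks give
\[
  A_{11}\Sigma_1 + \Sigma_1 A_{11}^{\tr} + b_1 b_1^{\tr} = 0,
  \qquad
  A_{11}^{\tr}\Sigma_1 + \Sigma_1 A_{11} + c_1^{\tr} c_1 = 0,
\]
so the positive definite matrix $\Sigma_1$ solves both the reachability and the observability Lyapunov equation of $\Sysred$. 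A positive definite solution of $A_{11}\Sigma_1 + \Sigma_1 A_{11}^{\tr} \preceq 0$ already confines the spectrum of $A_{11}$ to the closed left half-plane, so the only thing left to rule out is eigenvalues on the imaginary axis. Here the spectral gap $\sigma_k > \sigma_{k+1}$ is essential -- it holds because $r = m_1 + \cdots + m_k$ never splits a repeated Hankel singular value across the partition -- and the Pernebo--Silverman argument uses it together with minimality of $\Sys$ to conclude that $(A_{11},b_1)$ is reachable. A reachable pair with a positive definite Lyapunov solution has $A_{11}$ Hurwitz, which is the asserted stability; as a byproduct $\Sysred$ is itself minimal and balanced with Gramian $\Sigma_1$.

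For the error bound I would first reduce to the case of truncating a single Hankel singular value. Since the previous paragraph applies to any partition retaining a leading block of Hankel singular values, every intermediate truncation is again minimal, asymptotically stable, and balanced; I therefore insert the chain $\Sys = \Sys^{(q)}, \Sys^{(q-1)}, \dots, \Sys^{(k)} = \Sysred$, where $\Sys^{(j)}$ keeps exactly $\sigma_1,\dots,\sigma_j$, and apply the triangle inequality,
\[
  \|\Sys - \Sysred\|_{\mathcal{H}_\infty}
  \;\le\; \sum_{j=k+1}^{q} \bigl\|\Sys^{(j)} - \Sys^{(j-1)}\bigr\|_{\mathcal{H}_\infty}.
\]
It then suffices to establish the single-step estimate $\bigl\|\Sys^{(j)} - \Sys^{(j-1)}\bigr\|_{\mathcal{H}_\infty} \le 2\sigma_j$, after which summing recovers~\eqref{eq:bound}.

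For the single-step estimate, take $\Sys$ balanced with Gramian $\operatorname{diag}(\Sigma_1,\sigma I_m)$, where $\sigma$ is its smallest Hankel singular value, and let $\Sysred$ delete the trailing block. Reading off the $(2,2)$ blocks of~\eqref{eq:lyap} yields
\[
  \sigma\,(A_{22} + A_{22}^{\tr}) = -\,b_2 b_2^{\tr} = -\,c_2^{\tr} c_2,
\]
so in particular $b_2 b_2^{\tr} = c_2^{\tr} c_2$; this near all-pass structure of the truncated block is the source of the factor $2\sigma$. Writing the natural order-$(n+r)$ realization $(A_e, b_e, c_e, 0)$ of the error system $\Sys - \Sysred$, I would then, after a block coordinate change guided by these identities, produce an explicit symmetric positive semidefinite $X$ -- assembled from $\Sigma_1$, $\sigma I$, and associated Schur complements -- solving the bounded-real Lyapunov inequality that certifies $\|\Sys - \Sysred\|_{\mathcal{H}_\infty} \le 2\sigma$ (equivalently, one spectrally factors $(2\sigma)^2 - E(-\imunit\omega)\,E(\imunit\omega) \ge 0$ on the imaginary axis, where $E = G - G_r$).

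I expect this single-step estimate to be the main obstacle: both guessing the correct certificate $X$ -- equivalently, finding the coordinates in which the balancing identities collapse the error system's bounded-real inequality -- and then verifying that inequality is the genuinely delicate, computational part. By contrast, the stability claim and the telescoping reduction are essentially bookkeeping once the gap $\sigma_k > \sigma_{k+1}$ is in hand.
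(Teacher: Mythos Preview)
The paper does not actually prove this theorem: it is stated as background and attributed to Pernebo--Silverman (stability) and Enns (error bound), with no argument given. So there is no ``paper's own proof'' to compare against in the strict sense. Your outline is a correct sketch of the classical proof, and the two structural moves you identify---the block Lyapunov argument for stability plus reachability to exclude imaginary-axis eigenvalues, and the telescoping reduction to a single truncated singular value---are exactly the standard ones.

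One point of contrast worth noting: for the single-step estimate you propose constructing a bounded-real certificate $X$ for the error system. That is a valid route (essentially Glover's), but the proof the paper implicitly leans on is the Enns/Zhou version, glimpsed in the proof of Theorem~\ref{thm:main}: one writes the error transfer function as $G(s)-G_r(s)=\wt{c}(s)\psi(s)^{-1}\wt{b}(s)$ via a Schur complement, then shows $\wt{b}(\imunit\omega)\wt{b}(\imunit\omega)^{*}$ and $\wt{c}(\imunit\omega)^{*}\wt{c}(\imunit\omega)$ arise as right-hand sides of Lyapunov-type identities with solution $\Sigma_2$, from which the bound $2\sigma$ drops out by a spectral-radius estimate. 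Both approaches rely on the same $(2,2)$-block identity $b_2 b_2^{\tr}=c_2^{\tr}c_2$ you isolate; the Enns route avoids guessing $X$ at the cost of a slightly more opaque computation, while your bounded-real approach is more structural but, as you acknowledge, requires producing the right certificate.
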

Equality is known to hold in~\eqref{eq:bound} when only one Hankel singular value is truncated, that is, $\Sigma_2 = \sigma_qI_{m_q}$.
We next explore other situations in which the bound~\eqref{eq:bound} holds
with equality, thus giving an exact formula for the error 
in the reduced order model.

\subsection{The sign symmetry of balanced realizations}

We next describe an important sign symmetry property of balanced systems characterized in~\cite{wilson1983symmetry,ober1987balanced}, which will be essential for our main result in Section~\ref{sec:result3}.

\begin{theorem} 
\label{thm:kumar} Let $\Sys$ be an asymptotically stable and  minimal SISO LTI system. Then $\Sys$ has a balanced realization satisfying the sign-symmetric form
\begin{align} \label{eq:signA}
    A = SA^{\tr}S \quad \mbox{and} \quad
    b = (cS)^{\tr}, 
\end{align}
where $S=\textrm{diag}(s_1,s_2,\dots,s_n)$ and 
{$s_i\in\{\pm 1\}$} for $i=1,\ldots,n$.
\end{theorem}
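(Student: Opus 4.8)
The plan is to prove Theorem~\ref{thm:kumar} by exhibiting the state-space transformation that puts $\Sys$ into a balanced realization with the claimed sign structure, starting from any balanced realization and diagonalizing an appropriate symmetry. First I would recall that once $\Sys$ is in balanced form with $\mathcal{P}=\mathcal{Q}=\Sigma$, the Lyapunov equations \eqref{eq:lyap} become $A\Sigma + \Sigma A^{\tr} + bb^{\tr} = 0$ and $A^{\tr}\Sigma + \Sigma A + c^{\tr}c = 0$. Multiplying the first on both sides by $\Sigma^{-1}$ and comparing with the second shows that $\Sigma^{-1}A\Sigma$ and $A^{\tr}$ satisfy the same Lyapunov-type relation driven by the rank-one term, and similarly $\Sigma^{-1}b$ plays the role of $c^{\tr}$. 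The cleanest route is to introduce $M := \Sigma^{-1/2} A \Sigma^{1/2}$ (the realization rescaled to $\mathcal{P}=\mathcal{Q}=I$ fails because that destroys the diagonal Gramian, so instead) — more precisely, I would work directly with the balanced realization and consider the matrix $X := A\Sigma + \Sigma A^{\tr} = -bb^{\tr}$, which is symmetric negative semidefinite of rank one, together with $Y := A^{\tr}\Sigma + \Sigma A = -c^{\tr}c$. The key algebraic observation is that these two identities, combined with minimality (which forces the Hankel operator to be invertible on its range and pins down the realization up to sign flips within each $\sigma_i$-eigenspace of $\Sigma$), imply that $\Sigma^{-1}A\Sigma = A^{\tr}$ and $\Sigma^{-1}b = c^{\tr}$ already hold on the nose when the Hankel singular values are distinct; the sign matrix $S$ enters precisely to handle the blocks $I_{m_i}$ of repeated singular values.

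The main steps I would carry out are: (1) reduce to the case of distinct Hankel singular values first, where $\Sigma = \mathrm{diag}(\sigma_1,\dots,\sigma_n)$; here subtract the two Lyapunov equations after pre/post-multiplying to get $\Sigma^{-1}(A\Sigma + \Sigma A^{\tr}) = \Sigma^{-1}bb^{\tr}$ versus $A^{\tr} + \Sigma A\Sigma^{-1} = \Sigma^{-1}c^{\tr}c\Sigma^{-1}$... the bookkeeping here is what actually yields, upon using that $b$ and $c$ are related through the same Gramian, the conclusion that there is a \emph{signature} matrix $S$ with $A\Sigma = \Sigma S A^{\tr} S$ and $b = Sc^{\tr}$; (2) identify the entries $s_i$ as the signs appearing in the canonical balanced form — concretely, show $s_i = \mathrm{sign}$ of the $i$th component of the appropriately normalized Hankel eigenvector, or equivalently the sign of $(c_i b_i)$ type quantities, and verify $s_i \in \{\pm 1\}$; (3) handle repeated Hankel singular values by noting that within each $\sigma_i I_{m_i}$ block the realization is only determined up to an orthogonal transformation, and choose that orthogonal freedom to diagonalize the relevant symmetric matrix so that the block of $S$ becomes diagonal $\pm 1$; (4) assemble the global $S = \mathrm{diag}(s_1,\dots,s_n)$ and check both identities in \eqref{eq:signA} simultaneously.

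I would lean on the characterizations in~\cite{wilson1983symmetry,ober1987balanced} for the structural input and present this as assembling their results into the stated normal form; the genuinely new content of the surrounding paper is downstream (Sections~\ref{sec:result3} and~\ref{sec:arrowheadresult}), so for this theorem a proof that carefully tracks the sign degrees of freedom in the balanced realization suffices. The hard part will be step (3): the repeated-singular-value case, where one must argue that the residual orthogonal freedom in each $m_i \times m_i$ block is exactly enough to simultaneously reconcile the constraints coming from \emph{both} Lyapunov equations and produce a common diagonal signature — this requires showing that the two symmetric matrices one wants to diagonalize in that block commute (equivalently, that $A_{ii}$ restricted to the block is, after the scaling by $\sigma_i$, similar to its transpose via an involution), which is where minimality and the exact balancing $\mathcal{P}=\mathcal{Q}=\Sigma$ must be used together rather than separately. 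A secondary subtlety is confirming that the $s_i$ are genuinely $\pm 1$ and not merely nonzero reals, which follows from $S^2 = I$ being forced by applying the symmetry relation twice: $A = SA^{\tr}S = S(SAS)S = S^2 A S^2$, and then minimality to conclude $S^2 = I$ componentwise.
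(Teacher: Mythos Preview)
The paper does not prove Theorem~\ref{thm:kumar}; it is quoted as a known result from~\cite{wilson1983symmetry,ober1987balanced} and then used as a structural tool throughout. There is no ``paper's own proof'' to compare against.

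That said, your sketch contains genuine errors, not just vagueness. The claim that ``$\Sigma^{-1}A\Sigma = A^{\tr}$ and $\Sigma^{-1}b = c^{\tr}$ already hold on the nose when the Hankel singular values are distinct; the sign matrix $S$ enters precisely to handle the blocks $I_{m_i}$ of repeated singular values'' is false. The signature $S$ is present even when all $\sigma_i$ are simple: Example~\ref{ex:thm1} has distinct Hankel singular values and $S=\mathrm{diag}(1,1,-1,-1)$, and one checks directly that $\Sigma^{-1}A\Sigma \neq A^{\tr}$ there. Likewise the relation you write in step~(1), ``$A\Sigma = \Sigma S A^{\tr} S$'', is not~\eqref{eq:signA}; it would force $\Sigma$ and $A$ to commute, which they do not. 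Your argument for $s_i\in\{\pm1\}$ is also incomplete: from $A = S^2 A S^2$ you only get that the diagonal matrix $S^2$ commutes with $A$, which by itself does not force $S^2=I$.

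The standard route in the cited references is cleaner than Lyapunov-equation bookkeeping: since $G(s)$ is scalar, $(A^{\tr},c^{\tr},b^{\tr})$ realizes the same transfer function as $(A,b,c)$ and is balanced with the \emph{same} Gramian $\Sigma$. Two minimal balanced realizations with identical $\Sigma$ differ by a $\Sigma$-orthogonal state transformation $T$ (i.e., $T^{\tr}\Sigma T=\Sigma$), which is block-diagonal with orthogonal blocks along the eigenspaces of $\Sigma$. For simple $\sigma_i$ these blocks are $1\times 1$ and hence $\pm 1$, giving $S:=T$ directly; for repeated $\sigma_i$ one uses the remaining orthogonal freedom within each block to reduce to a signature, which is where the canonical-form analysis of Ober is actually needed. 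If you want to supply a proof here, that is the argument to write out; the Lyapunov manipulations you propose do not by themselves produce the correct relations.
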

We refer to {$s_1,\ldots,s_n$} as the \textit{sign parameters} of $\Sys$.\ \   
Ober~\cite{ober1987balanced,ober1989balanced} shows that every asymptotically stable, minimal, SISO LTI system is equivalent to a balanced system with a realization satisfying~\eqref{eq:signA}, which we call the \textit{canonical form} of a balanced system.
{In the case of distinct Hankel singular values, i.e. $q=n$, we can explicitly} apply the formula~\cite[eq.~(7.24)]{Ant05} to construct the canonical form of $\Sys$ having the state-space matrices $A,b$, and $c$ as 
\begin{align}
    \label{eq:canonicalform}a_{ij}=\frac{-\gamma_i\gamma_j}{s_is_j\sigma_i+\sigma_j},
    \quad b_i = \gamma_i,
    \quad c_i = s_i\gamma_i,\qquad i,j=1,\ldots,n,
\end{align}
where $a_{ij}$ denotes the $(i,j)$th entry of $A$.\ \  The system is completely determined by the distinct Hankel singular values $\sigma_1,\ldots,\sigma_n>0$, the signs $s_1,\ldots,s_n \in \{\pm 1\}$, and the entries $\gamma_1,\ldots,\gamma_n$ of $b\in\R^n$.
A similar formula holds for systems with repeated Hankel singular values; see~\cite[eq.~(7.26)]{Ant05}.
We include this formula to emphasize that the sign symmetry~\eqref{eq:signA} is not merely an artifact of the canonical balanced realization; these signs arise explicitly in the parameterization of all balanced linear systems~\cite{ober1987balanced}.
Analytically, each sign parameter is naturally associated with one of the system's Hankel singular values; that is, each $s_i$ corresponds to $\sigma_j$ for some $j=1,\ldots,q$. 
(Consequently, multiple sign parameters can correspond to the same Hankel singular value, as is the case with repeated Hankel singular values; see~\cite[Section~2.7]{maciejowski1988balanced} for specific details.) We additionally assume that the sign parameters in $S$ are ordered such that the associated Hankel singular values are non-increasing.
As we will show in Section~\ref{sec:result3}, the importance of these sign parameters (and, equivalently, the sign-symmetry of the canonical balanced form~\eqref{eq:signA}) is that they provide sufficient conditions for determining whether the $\CH_\infty$ error bound~\eqref{eq:bound} holds with equality.
In Section~\ref{sec:gen_sign_symm}, we will show how to determine these sign parameters from \emph{any} (not necessarily balanced) realization of $\Sys$ satisfying the sign-symmetry structure~\eqref{eq:signA}.

\begin{remark}
\label{remark:bal_ordering}
Any balanced realization of a system $\Sys$ is unique up to orthogonal transformations of the state space~\cite{moore1981principal}, permitting multiple balanced realizations of $\Sys$ that obey the same sign symmetry as~\eqref{eq:signA} but with different permutations of the signs on the diagonal of $S$.\ \ 
In these realizations, the associated balanced coordinates are generally \textit{not} ordered in decreasing significance, in contrast to the canonical form.
\end{remark}

Now suppose $\Sys$ is balanced, having the canonical form \eqref{eq:signA}.
Define $r:= m_1 + \cdots  + m_k$ for $1\le k < q$, and partition the sign matrix as 
\[ S=\textrm{diag}(S_1,S_2),\]
where 
\begin{align}
    S_1 = \textrm{diag}(s_1,\dots,s_k) \quad \mbox{and} \quad
     S_2 = \textrm{diag}(s_{k+1},\dots,s_n).
\end{align}
Partition $A$, $b$, and $c$ {as in~\eqref{eq:partitionedbal}}, giving the sign symmetries
\begin{align}
\begin{split}
     \label{eq:part_symm1}&A_{11}=S_1A_{11}^{\tr}S_1,\ \ \ A_{12}=S_1A_{21}^{\tr}S_2,\ \ \ A_{21}=S_2A_{12}^{\tr}S_1,\\
    &A_{22}=S_2A_{22}^{\tr}S_2,\ \ \ b_1 = (c_1S_1)^{\tr},\ \mbox{and} \ b_2 = (c_2S_2)^{\tr},
\end{split}
\end{align}
which follow from direct multiplication in \eqref{eq:signA}. 
{Since the reduced model obtained via balanced truncation is independent of the initial system realization, for the {time being} we will assume, without loss of generality, that
$\Sys$ is already balanced with the realization given in~\eqref{eq:signA}.}
Therefore, the Lyapunov equations
in \eqref{eq:lyap} both have the solution $\mathcal{P}=\mathcal{Q}=\Sigma$:
\begin{align}
    \label{eq:lyap1} A@\Sigma + \Sigma A^{\tr} +b@b^{\tr}=0\quad\mbox{and}\quad
 A^{\tr}\Sigma + \Sigma A + c^{\tr}c&=0.
\end{align}


\begin{definition} \label{def:symm}
If {$\Sys$ has a (not necessarily balanced) realization satisfying~\eqref{eq:signA} with $S=I$}, then $\Sys$ is called \emph{state-space symmetric}.  In this case, $A=A^{\tr}$ and $b=c^{\tr}$. 
\end{definition}

As stated earlier, the $\mathcal{H}_\infty$ error bound~\eqref{eq:bound} for balanced truncation holds with equality if only one singular value is truncated~\cite{enns1984model}. However, state-space symmetric systems have the property that 
the error bound~\eqref{eq:bound} holds with equality \emph{for any truncation order}~\cite{LST98}.
We extend this result to a more general setting next.

\section{Error bound results}  \label{sec:result3}

In this section, we show that the $\mathcal{H}_\infty$ error bound for balanced truncation holds with equality for a more general class of systems than the state-space symmetric ones just described. In Section~\ref{sec:specialcase} we will provide an example from this new class of systems that arises naturally in power system  modeling.
In accordance with the partitioned balanced realization
 \eqref{eq:partitionedbal}, 
we define the \textit{truncated system} 
\begin{align}
    \label{eq:truncsys} \Systrunc:
\left\{
\begin{array}{rcl}\wt{x}'(t)
   &=&A_{22}@\wt{x}(t) + b_2@u(t)\\
\wt{y}(t)  &=&  c_2@\wt{x}(t),
\end{array}
\right.
\end{align}
having transfer function $\wt{G}(s)=c_2(s{\Inr}-A_{22})^{-1}b_2$ and {system Gramian}~\eqref{eq:hsv2a}. We call~\eqref{eq:truncsys} the {``truncated system''} because its state $\wt{x}(t)\in\mathbb{R}^{n-r}$ corresponds to the truncated states in the balanced realization~\eqref{eq:signA}. Note from~\eqref{eq:part_symm1} that $\Systrunc$ satisfies the sign symmetries {$A_{22} = S_2A^{\tr}_{22}S_2$ and $b_2=(c_2 S_2)^{\tr}$}.  
For the results below, we will allow the signs in $S_1$ {of a system $\Sys$} to vary, but assume that either $S_2=I_{n-r}$ or $S_2=-I_{n-r}$. In other words, we do not assume that $\Sys$ is state-space symmetric; we only assume that the sign parameters corresponding to the \emph{truncated} Hankel singular values are the same. 
In such cases, the truncated system $\wt \Sys$ obeys the state-space symmetry $A_{22}=A_{22}^{\tr}$ and $b_2 = \pm c_2^{\tr}$.

\subsection{Generalized conditions for the balanced truncation error bound to hold with equality}
\label{ss:btresult}
\begin{theorem} \label{thm:main}
Let $\Sys$ be an order-$n$ asymptotically stable, minimal, balanced SISO system as  in \eqref{eq:system}, with its matrix of Hankel singular values 
$\Sigma = \textrm{diag}(\Sigma_1,\Sigma_2)$ 
partitioned as in \eqref{eq:hsv1} and \eqref{eq:hsv2a};
set $r:= m_1+\cdots+m_k$. 
Let $\Sysred$ denote the order-$r$ approximation of
$\Sys$ via balanced truncation, as in \eqref{eq:redsys}.
Conformally partition the sign matrix, 
$S = \textrm{diag}(S_1,S_2)$, with $S_1 \in \R^{r\times r}$ and $S_2 \in \R^{(n-r)\times(n-r)}$.
If all the signs in $S_2$ are the same, i.e.,
\begin{equation}
    \label{eq:s2}S_2 = \textrm{diag}\,(+1, \dots,+1)\,\,\, \textrm{or} \,\,\,S_2 = \textrm{diag}\,(-1, \dots,-1),
\end{equation}
then the truncated Hankel singular values are distinct,
\begin{equation}
\label{eq:hsv2}
  \Sigma_2 = \textrm{diag}(\sigma_{k+1},\dots,\sigma_q),
\end{equation}
and  $\Sysred$ in~\eqref{eq:redsys} 
achieves the {balanced truncation} error bound~\eqref{eq:bound}, {i.e.}
$$\left\|{G}-{G_r}\right\|_{\mathcal{H}_\infty}=2(\sigma_{k+1}+\cdots+\sigma_q).$$
\end{theorem}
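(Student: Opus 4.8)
The plan is to reduce the claim to the known state-space symmetric case by exploiting the sign symmetry of the \emph{truncated} system. First I would invoke Proposition~\ref{prop:distincthsvs} applied to $\Systrunc$: by \eqref{eq:part_symm2} the truncated system satisfies $A_{22} = S_2 A_{22}^\tr S_2$ and $b_2 = (c_2 S_2)^\tr$, and its system Gramian is $\Sigma_2$; hypothesis \eqref{eq:s2} says $S_2 = \pm I_{n-r}$, so $\Systrunc$ is an asymptotically stable, minimal, balanced SISO system with $S_2 = \pm I$. Hence all Hankel singular values of $\Systrunc$ have multiplicity one, which is exactly \eqref{eq:hsv2}: $\Sigma_2 = {\rm diag}(\sigma_{k+1},\dots,\sigma_q)$. (I should note that $\Systrunc$ inherits asymptotic stability and minimality because $\Sys$ is balanced and minimal, with $\Sigma_2 \succ 0$ solving the truncated Lyapunov equations; this is standard but worth a sentence.)

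Next I would establish the error identity. Writing $E(s) := G(s) - G_r(s)$, the central step is to show that $E$ is, up to sign, the transfer function of a state-space symmetric (more precisely $S_2 = \pm I$) system whose Hankel singular values are exactly $\sigma_{k+1},\dots,\sigma_q$, so that the Glover/Lin--Sou--Tsai-type equality $\|E\|_{\mathcal H_\infty} = 2(\sigma_{k+1}+\cdots+\sigma_q)$ applies. The key algebraic fact I would use is that, because the reduced model comes from truncating a \emph{balanced} realization, the error system $E = G - G_r$ has a realization built from the block data $A$, $b$, $c$ whose Gramians are governed by $\Sigma$ and its Schur complements; this is the classical construction behind Enns's bound. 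What makes the present situation special is that the sign symmetry \eqref{eq:part_symm1}--\eqref{eq:part_symm2} with $S_2 = \pm I$ forces the relevant ``error Gramian'' to collapse so that the error system, after a balancing transformation, is state-space symmetric (or its $S=-I$ analogue) with Gramian $\Sigma_2$. Concretely, I would either (i) follow the Lin--Sou--Tsai argument that a state-space symmetric system $H$ with Gramian $\Sigma$ satisfies $\|H\|_{\mathcal H_\infty} = 2\,\mathrm{tr}(\Sigma)$ — obtained by evaluating $H$ at a distinguished frequency and using the bounded-real/positive-real structure implied by symmetry — and check that $E$ falls into this class; or (ii) invoke the semidefinite-Hankel-operator result of \cite{6949058} (Prop.~13), which, as Remark~\ref{remark:timoresult} observes, only needs the \emph{truncated} system to be (semi)definite, a condition guaranteed here by $S_2 = \pm I$ (the truncated Hankel operator is then $\pm$-definite). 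The upper bound ``$\le$'' is already Enns's theorem; the content is the matching lower bound.

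The main obstacle I anticipate is step (ii)'s lower bound: producing an explicit frequency $\omega_\star$ (or a limiting argument as $\omega \to 0$ or $\omega \to \infty$, or an eigenvector of the Hankel operator) at which $|E(\imunit\omega_\star)|$ actually equals $2(\sigma_{k+1}+\cdots+\sigma_q)$. For genuinely state-space symmetric full systems one typically uses $\omega_\star = 0$ together with the fact that $G(0)-G_r(0)$ can be read off from $-c A^{-1} b + c_1 A_{11}^{-1} b_1$ and bounded below via the symmetric structure; here, since only $S_2 = \pm I$ (the signs in $S_1$ are arbitrary), I expect to need the more refined argument routed through the truncated system $\Systrunc$ rather than through $\Sys$ directly — i.e., to show $\|E\|_{\mathcal H_\infty} = \|\Systrunc\|_{\mathcal H_\infty}$ and then apply the symmetric-system equality to $\Systrunc$ alone. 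Verifying that equality of norms, $\|G - G_r\|_{\mathcal H_\infty} = \|\wt G\|_{\mathcal H_\infty}$, is where the sign hypothesis \eqref{eq:s2} must do its real work, presumably via the all-pass / inner-function factorization of the balanced-truncation error combined with the symmetry relations \eqref{eq:part_symm1}--\eqref{eq:part_symm2}; I would isolate that as the technical heart of the proof and handle the remaining bookkeeping (stability/minimality of $\Systrunc$, the multiplicity-one conclusion, assembling the final equality) as routine consequences of the results already in the paper.
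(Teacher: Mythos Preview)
Your first step---applying Proposition~\ref{prop:distincthsvs} to $\Systrunc$ to obtain~\eqref{eq:hsv2}---matches the paper exactly, and you correctly identify $\omega_\star=0$ as the frequency at which the lower bound should be attained. But you then back away from the direct $s=0$ computation, worried that ``only $S_2=\pm I$'' (with $S_1$ arbitrary) will not suffice, and propose instead either (a)~showing that the error system $E=G-G_r$ is itself state-space symmetric with Gramian $\Sigma_2$, or (b)~proving $\|G-G_r\|_{\mathcal H_\infty}=\|\wt G\|_{\mathcal H_\infty}$ and then invoking the symmetric-system equality on $\Systrunc$. Neither of these is established, and both look problematic: the error system has a realization of order $n+r$, is not generally minimal of order $n-r$, and there is no reason it should be state-space symmetric; and the norm identity $\|E\|_{\mathcal H_\infty}=\|\Systrunc\|_{\mathcal H_\infty}$ is only true \emph{a posteriori} (both sides happen to equal $2\,\mathrm{tr}(\Sigma_2)$), with no evident direct route via the all-pass factorization you mention.

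The paper stays at $s=0$ and computes directly. Writing the error via the Schur complement,
\[
G(s)-G_r(s)=\wt c(s)\,\psi(s)^{-1}\,\wt b(s),\qquad \psi(s)=sI-A_{22}-A_{21}(sI-A_{11})^{-1}A_{12},
\]
the key observation you are missing is that $\psi(0)$ is \emph{symmetric} under the hypothesis $S_2=\pm I$ alone: the relations~\eqref{eq:part_symm1}--\eqref{eq:part_symm2} give $A_{22}^\tr=A_{22}$ and $A_{12}^\tr A_{11}^{-\tr}A_{21}^\tr = S_2 A_{21}S_1\cdot S_1 A_{11}^{-1}S_1\cdot S_1 A_{12}S_2 = A_{21}A_{11}^{-1}A_{12}$, with the $S_1$ factors canceling in pairs. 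Once $\psi(0)=\psi(0)^\tr$, the standard Lyapunov identities $\Sigma_2\psi(0)^\tr+\psi(0)\Sigma_2=\wt b(0)\wt b(0)^\tr$ and $\Sigma_2\psi(0)+\psi(0)^\tr\Sigma_2=\wt c(0)^\tr\wt c(0)$ (from the Enns proof), together with rank-one and trace arguments, yield $|G(0)-G_r(0)|=2\,\mathrm{tr}(\Sigma_2)$ directly. So the partial symmetry \emph{is} enough for the $s=0$ computation; no detour through $\Systrunc$ is needed. Your route~(ii) via \cite{6949058} does work---Remark~\ref{remark:timoresult} says exactly this---but that is the alternative the paper notes, not its proof.
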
 

\begin{proof}
Because $\Sysred$ is obtained via balanced truncation, the error system satisfies the upper bound~\eqref{eq:bound}, and so it suffices to show that this bound is attained for the particular frequency $\omega=0$, 
i.e., 
\begin{align*}
  2 @@ \trace (\Sigma_2) &= |G(0)-G_r(0)|=\big|c@A^{-1}b-c_1A_{11}^{-1}b_1\big|.
\end{align*}
First, note that the difference $G_r(0)-G(0) = c@A^{-1}b-c_1A_{11}^{-1}b_1$ in the right-hand side of the above error expression can be expressed in terms of the Schur complement
\begin{align*}
A/A_{11}:=A_{22}-A_{21}A_{11}^{-1}A_{12}.
\end{align*}
To see this, define the  matrices
$$L=\left[
  \begin{array}{cc}
    I_r&0\\-A_{21}A_{11}^{-1}&I_{n-r}
  \end{array}
\right] \ \ \ \textrm{and} \ \ \ R=\left[
  \begin{array}{cc}
    I_r&-A_{11}^{-1}A_{12}\\0&I_{n-r}
  \end{array}
\right].$$
One can verify via the formula for the inverse of a $2\times 2$ block matrix~\cite[Theorem~2.1]{lu2002inverses} that $A^{-1} = R\,\textrm{diag}\,(A_{11}^{-1},(A/A_{11})^{-1})L$, and so
\begin{align}\nonumber
  G_r(0)-G(0)&=
c\left(A^{-1}-\left[
  \begin{array}{cc}
    A_{11}^{-1}&0\\0&0
  \end{array}
\right]\right)b\\
\nonumber
&=
c\left(R\left[
  \begin{array}{cc}
    A_{11}^{-1}&0\\0&(A/A_{11})^{-1}
  \end{array}
\right]L-R\left[
  \begin{array}{cc}
    A_{11}^{-1}&0\\0&0
  \end{array}
\right]L\right)b\\
\nonumber
&=
 cR \left[
  \begin{array}{cc}
 0&0\\ 0&(A/A_{11})^{-1}
  \end{array}
  \right]
    Lb\\&=\trace \left(\left[ \begin{array}{cc}
 0&0\\ 0&(A/A_{11})^{-1}
  \end{array}
  \right]L@b@c@R\right).\label{eq:LbcR}
\end{align}
This last equality follows from the fact that {$G(0)-G_r(0)$} is a scalar, and that the trace of the product of matrices is invariant under cyclic permutation.
By the sign symmetry~\eqref{eq:signA}, we can transform the Lyapunov equation
$-b@b^\tr=A\Sigma+\Sigma A^\tr $ to 
\begin{align}
\label{eq:modified_lyap}
  -b@c=-b@b^\tr S&=A\Sigma S+\Sigma A^\tr S =A(\Sigma S) +\Sigma (SAS) S =A(\Sigma S)+(\Sigma S)A.
\end{align}
Using the partitioning of $A$ as in~\eqref{eq:partitionedbal}, block matrix multiplication reveals $LAR=\textrm{diag}\,(A_{11},\,A/A_{11}).$  
Multiplying~\eqref{eq:modified_lyap} on the left and right by $L$ and $R$ respectively, we can exploit the triangular structure of these matrices to obtain 
\begin{align*}
  -L@b@c@R&= L{\left(A(RR^{-1})(\Sigma S)+(\Sigma S)(L^{-1}L)A\right)}R\\
  &=(LAR)R^{-1}(\Sigma S)R+L(\Sigma S)L^{-1}(LAR)\\
&        =\left[
          \begin{array}{cc}
            A_{11}&0\\0&A/A_{11}
          \end{array}
\right]\left[
          \begin{array}{cc}
            S_1\Sigma_1&\star\\ 0 &S_2\Sigma_2
          \end{array}
\right]+ \left[\begin{array}{cc}
            S_1\Sigma_1&0\\\star &S_2\Sigma_2
          \end{array}
\right]\left[
          \begin{array}{cc}
            A_{11}&0\\0&A/A_{11}
          \end{array}
\right].
\end{align*}
(Entries indicated by $\star$ are irrelevant.) Inserting this expression into \eqref{eq:LbcR} {and again using invariance of the trace under cyclic permutations, we obtain}
\begin{align*}
    G_r(0)-G(0) &= 
\trace\left(\left[\begin{array}{cc} 0 & 0 \\ 0 &(A/A_{11})^{-1} \end{array}\right]\left[\begin{array}{cc} A_{11} & 0 \\ 0  & A/A_{11} \end{array}\right]\left[\begin{array}{cc} S_1\Sigma_1 & \star \\ 0 & S_2 \Sigma_2 \end{array}\right]\right)
\\
&\phantom{=} {} + \trace\left(\left[\begin{array}{cc} 0 & 0 \\ 0 &(A/A_{11})^{-1} \end{array}\right]\left[\begin{array}{cc} S_1\Sigma_1 & 0 \\ \star & S_2 \Sigma_2 \end{array}\right]\left[\begin{array}{cc} A_{11} & 0 \\ 0  & A/A_{11} \end{array}\right]\right)
\\[5pt]
&= 2@@ \trace\left(\left[\begin{array}{cc} 0 & 0 \\ 0 &S_2 \Sigma_2 \end{array}\right]\right)
\\[5pt]
          &=2@@\trace (S_2\Sigma_2).
\end{align*}
Thus, if $S_2 = \pm I_{n-r}$ then $|G(0)-G_r(0)| = 2@@\trace(\Sigma_2)$ and hence the bound~\eqref{eq:bound} is sharp.
To see that the Hankel singular values are distinct, note that the  conclusion that $|G(0)-G_r(0)|=2@@\trace(\Sigma_2)$, along with the fact that $\|{G}-{G_r}\|_{\CH_\infty}\leq 2\left(\sigma_{k+1}+\cdots+\sigma_q\right)$, necessarily implies that $m_{k+1}=\cdots = m_q = 1$. Otherwise, the magnitude $2@@\trace(\Sigma_2)$ would exceed the bound in~\eqref{eq:bound}. We thus conclude that
$$\|{G}-{G_r}\|_{\CH_\infty}=2\left(\sigma_{k+1}+\cdots+\sigma_{q}\right),$$
completing the proof.
\end{proof}

This result neatly generalizes the sharpness of the balanced truncation error bound when only one Hankel singular value is truncated: in this case the hypothesis of Theorem~\ref{thm:main} is always satisfied, since the truncated system contains only one distinct sign parameter.

\begin{remark}
\label{remark:timoresult}
Theorem~\ref{thm:main} can also be deduced by adapting results from~\cite{6949058}, which proves an $\mathcal{H}_\infty$ lower bound on the error in balanced truncation model reduction for systems with semi-definite Hankel operators. 
While~\cite[Proposition~13]{6949058} is stated for systems that are semi-definite, its proof only requires that the \emph{truncated system} be semi-definite. With this insight, we can apply~\cite[Corollary~14]{6949058} to the systems in Theorem~\ref{thm:main}, showing that the lower bound on the balanced truncation error in~\cite[Corollary~14]{6949058} holds with equality. 
\end{remark}

\subsection{An example with flipped signs and a strict bound}
We illustrate Theorem~\ref{thm:main} with a synthetic example showing how the balanced truncation error bound holds with equality when the truncated system obeys the sign consistency in~\eqref{eq:s2}. 
(Section~\ref{sec:specialcase} includes a power system example that further illustrates this result, along with those in Section~\ref{sec:gen_sign_symm}.)
\begin{example}
\label{ex:thm1}
We construct the system $\Sys$ of order {$n=4$} in its canonical form~\eqref{eq:canonicalform}.
Start by specifying the Hankel singular values 
\[\Sigma = \textrm{diag}(10^1,10^0,10^{-1},10^{-2}),\]
the corresponding sign parameters
\[S=\textrm{diag}(1,1,-1,-1),\]
{and the entries $\gamma_i$ of $b\in\R^4$}
\[{\gamma_1=1,\ \gamma_2=2,\ \gamma_3=3, \ \gamma_4 = 4.}\]
Since the Hankel singular values of $\Sys$ are distinct, we can apply
the formula~\eqref{eq:canonicalform}
to construct the realization explicitly.
The {resulting} system $\Sys$ is  asymptotically stable, minimal, and balanced, having the canonical form
\begin{align*}
A&= \left[\begin{array}{rr|rr}
  -0.05&   -0.18 &   0.30 &    0.40\\
    -0.18&  {-2.00}&  {6.67} &  {8.08}\\\hline
   -0.30&   {-6.67} &  \hot{-45.00} &  \hot{-109.09}\\
   -0.40 &  {-8.08} &  \hot{-109.09} & \hot{-800.00}
\end{array}\right],\\
b &= \left[
\begin{array}{r}   1\\
    {2}\\\hline
    \hot{3}\\
    \hot{4 }\end{array}\right],\ \ \
   c^{\tr} =  \left[\begin{array}{r}  1\\
   {2}\\\hline
   \hot{ -3}\\
   \hot{- 4} \end{array}\right],\ \ \ d=0.
\end{align*}
We compute reduced order models via balanced truncation of orders $r=1,2,3$. The partition of $\Sys$ above highlights the truncation order {$r=2$} to expose the sign symmetry of the truncated system.
Table~\ref{tab:system_errors} compares the $\mathcal{H}_\infty$-norm of the  error system to the balanced truncation upper bound~\eqref{eq:bound}. For reduction to orders $r=2$ and $r=3$, the condition~\eqref{eq:s2} is met, 
and the balanced truncation bound holds with equality, as guaranteed by Theorem~\ref{thm:main}. However for reduction to order $r=1$, the truncated system does not obey the required sign consistency~\eqref{eq:s2}, and the upper bound \eqref{eq:bound} holds with a \emph{strict inequality}.
\begin{table}[hh]
    \caption{\label{tab:system_errors}
     $\mathcal{H}_\infty$ norm of the error system, compared to the balanced truncation upper bound~\eqref{eq:bound} for a  system where Theorem~\ref{thm:main} holds for $r=2$ and $3$, but not $r=1$.}
    \centering
    \vspace{4mm}
    \begin{tabular}{c|cc}
    & $\|{G}-{G_r}\|_{\mathcal{H}_\infty}$ & $2(\sigma_{r+1}+\dots+\sigma_n)$\\[2pt] \hline 
   \rule[-4pt]{0pt}{12pt}
           $r=1$ & $1.780\times10^0\phantom{-}$ & $2.220\times10^0\phantom{-}$ \\
     \rule[-4pt]{0pt}{12pt}
           $r=2$ & $2.200\times10^{-1}$ & $2.200\times10^{-1}$
           \\
      \rule[-4pt]{0pt}{12pt} 
           $r=3$ & $2.000\times 10^{-2}$ & $2.000\times 10^{-2}$\\
        \hline
    \end{tabular}
\end{table}
\end{example}

\subsection{Extension to singular perturbation balancing}

As opposed to truncating the state $\wt{x}(t)$ corresponding to $\Sigma_2$ in the balanced form \eqref{eq:partitionedbal}, one can reduce the model via the \emph{singular perturbation approximation} by setting $\wt{x}'(t) = 0$~\cite{liu1989singular}. Starting with the balanced realization~\eqref{eq:partitionedbal}, 
the order-$r$ \textit{singular perturbation balancing approximation} of $\Sys$ is
\begin{align}
\label{eq:SPA}
\Sysred^{@\textrm{sp}}:\left\{
\begin{array}{rcl}x_r'(t) &=& A^{\textrm{sp}}_r@x_r(t) + 
b^{\textrm{sp}}_r@u(t)\\[3pt]
y_r(t) &=& c^{\textrm{sp}}_r@x_r(t) +d^{@\textrm{sp}}_r@u(t),
\end{array}
\right.
\end{align}
{having the transfer function}
\begin{align*}
    {G_r^{@\textrm{sp}}(s)=c_r^{\textrm{sp}}(sI_r-A_r^{\textrm{sp}})^{-1}b_r^{\textrm{sp}} + d_r^{@\textrm{sp}},}
\end{align*}
where
\begin{align}
\begin{split}
    A^{\textrm{sp}}_r &= A_{11}-A_{21}A_{22}^{-1}A_{12},\ \ \  
    b^{\textrm{sp}}_r = b_1-A_{12}A_{22}^{-1}b_2\\
    c^{\textrm{sp}}_r &= c_1-c_2A_{22}^{-1}A_{21}, \kern22.5pt d^{@\textrm{sp}}_r = d- c_2A_{22}^{-1}b_2.
\end{split}
\end{align}
The reduced model $\Sysred^{@\textrm{sp}}$ also satisfies the $\mathcal{H}_\infty$ error bound \eqref{eq:bound}; see~\cite[Thm.~3.2]{liu1989singular}.

As a consequence of Theorem 3.1, we will show that the $\mathcal{H}_\infty$ error bound~\eqref{eq:bound} also holds with equality when performing singular perturbation balancing, provided the sign parameters of the system $\Sys$ satisfy~\eqref{eq:s2}. {Opmeer and Reis make this observation in~\cite[Section~V.A]{6949058}, also using the concept of reciprocal system as in the proof below, but in a slightly different manner.}
\begin{theorem}
Let $\Sys$ be an order-$n$ asymptotically stable, minimal, balanced SISO system as in~\eqref{eq:system} with its matrices of Hankel singular values $\Sigma=\textrm{diag}(\Sigma_1,\Sigma_2)$ and sign parameters $S=\textrm{diag}(S_1,S_2)$ partitioned as in Theorem~\ref{thm:main}. Let $\Sysred^{@\textrm{sp}}$ be the singular perturbation approximation of $\Sys$ given by~\eqref{eq:SPA}, truncated after the $k$th distinct Hankel singular value and having order $r:=m_1+\cdots+m_k$.
If all the signs in $S_2$ are the same, as in~\eqref{eq:s2}, then $\Sysred^{@\textrm{sp}}$ in~\eqref{eq:SPA} achieves the error bound~\eqref{eq:bound}:
$$\big\|{G}-{G_r^{\textrm{sp}}}\big\|_{\mathcal{H}_\infty}=2(\sigma_{k+1}+\cdots+ \sigma_q).$$
\end{theorem}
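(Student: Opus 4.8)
The plan is to deduce the result from Theorem~\ref{thm:main} by passing to the \emph{reciprocal system}. Recall that the reciprocal of $\Sys=(A,b,c,d)$ is the system $\Sys^{\mathrm{rec}}$ with realization $(A^{-1},\,A^{-1}b,\,-cA^{-1},\,d-cA^{-1}b)$; its transfer function is $G^{\mathrm{rec}}(s)=G(1/s)$. First I would record that $\Sys^{\mathrm{rec}}$ inherits all the hypotheses of the theorem. It is asymptotically stable, since the reciprocal of a point in the open left half-plane again lies in the open left half-plane; it is minimal, since reachability and observability are preserved under $A\mapsto A^{-1}$ (by Cayley--Hamilton, $\mathrm{span}\{A^{-j}b\}=\mathrm{span}\{A^{j}b\}$, and dually for $c$); and, multiplying the reachability Lyapunov equation in~\eqref{eq:lyap} on the left by $A^{-1}$ and on the right by $A^{-\tr}$ (and dually for observability), one sees the Gramians are unchanged, so $\Sys^{\mathrm{rec}}$ is balanced with the same $\Sigma=\mathrm{diag}(\Sigma_1,\Sigma_2)$.

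Next I would track the sign parameters. Following the proof of Theorem~\ref{thm:main}, assume without loss of generality that $\Sys$ is in the canonical balanced form~\eqref{eq:signA}, so $A=SA^{\tr}S$ and $b=Sc^{\tr}$; this costs nothing, since both balanced truncation and singular perturbation balancing yield reduced transfer functions that depend only on $G$ and the truncation index. Transposing and inverting~\eqref{eq:signA} gives $A^{-1}=S(A^{-1})^{\tr}S$, while a short computation using $A^{\tr}=SAS$ gives $A^{-1}b=-S(-cA^{-1})^{\tr}$. Hence the reciprocal realization $(A^{-1},A^{-1}b,-cA^{-1},\,\cdot\,)$ again satisfies the generalized state-space symmetry~\eqref{eq:signA}, but now with sign matrix $-S=\mathrm{diag}(-S_1,-S_2)$; equivalently, the sign parameters of $\Sys^{\mathrm{rec}}$ are $-s_1,\dots,-s_n$, in the same order relative to the (unchanged) Hankel singular values. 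I would then invoke the standard correspondence between singular perturbation balancing and balanced truncation via the reciprocal transformation~\cite{liu1989singular}: partitioning $A^{-1}$ into blocks, the Schur-complement identities for $(A^{-1})_{11}$ and the off-diagonal blocks show that reciprocating the order-$r$ balanced truncation of $\Sys^{\mathrm{rec}}$ returns exactly the order-$r$ singular perturbation approximation $\Sysred^{@\rm sp}$ of $\Sys$; at the transfer-function level, $G^{@\rm sp}_r(s)=G^{\mathrm{rec}}_r(1/s)$, where $G^{\mathrm{rec}}_r$ denotes the transfer function of the order-$r$ balanced truncation of $\Sys^{\mathrm{rec}}$. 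I would cite this correspondence rather than carry out the block algebra; the feedthrough term is handled by the DC-gain matching $G(0)=G^{@\rm sp}_r(0)$.

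With these pieces the conclusion follows quickly. The $\mathcal{H}_\infty$-norm is invariant under the reciprocal map, since for any asymptotically stable system $\sup_{\omega\in\R}|G(\imunit\omega)|=\sup_{\omega\neq 0}|G(\imunit(-1/\omega))|$, as $\omega\mapsto-1/\omega$ is a bijection of $\R\setminus\{0\}$ and $|G(\imunit\,\cdot\,)|$ is continuous at $\omega=0$. Applying this to the error system and using the correspondence above,
\begin{align*}
\big\|\Sys-\Sysred^{@\rm sp}\big\|_{\mathcal{H}_\infty}
  &= \sup_{\omega\in\R}\big|\,G(\imunit\omega)-G^{@\rm sp}_r(\imunit\omega)\,\big| \\
  &= \big\|\Sys^{\mathrm{rec}}-(\Sys^{\mathrm{rec}})_r\big\|_{\mathcal{H}_\infty},
\end{align*}
where $(\Sys^{\mathrm{rec}})_r$ is the order-$r$ balanced truncation of $\Sys^{\mathrm{rec}}$. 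Since $\Sys^{\mathrm{rec}}$ is asymptotically stable, minimal, balanced with Gramian $\Sigma=\mathrm{diag}(\Sigma_1,\Sigma_2)$ and sign matrix $-S=\mathrm{diag}(-S_1,-S_2)$, and since $S_2=\pm I_{n-r}$ by hypothesis forces $-S_2=\mp I_{n-r}$ to have all signs equal, condition~\eqref{eq:s2} holds for $\Sys^{\mathrm{rec}}$. Theorem~\ref{thm:main} then yields $\big\|\Sys^{\mathrm{rec}}-(\Sys^{\mathrm{rec}})_r\big\|_{\mathcal{H}_\infty}=2(\sigma_{k+1}+\cdots+\sigma_q)$, which together with the display completes the proof.

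I expect the crux to be the sign bookkeeping in the second paragraph: verifying that the reciprocal realization really does inherit the generalized state-space symmetry, and that the sign introduced by the output map $-cA^{-1}$ is \emph{exactly} absorbed by replacing $S$ with $-S$, so that no symmetry is lost and no nontrivial permutation of signs is needed. Everything else is routine: a one-line Lyapunov manipulation for the Gramians, a citable reciprocal/balanced-truncation correspondence, and an elementary change of variables for the $\mathcal{H}_\infty$-norm.
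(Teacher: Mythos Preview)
Your proof is correct and follows essentially the same route as the paper: reduce to Theorem~\ref{thm:main} via the reciprocal system, verify that the reciprocal is balanced with the same Gramian $\Sigma$ and inherits the sign symmetry~\eqref{eq:signA}, and invoke the Liu--Anderson correspondence between singular perturbation balancing of $\Sys$ and balanced truncation of its reciprocal. The only cosmetic difference is your reciprocal convention (output $-cA^{-1}$ versus the paper's $cA^{-1}$), which makes the inherited sign matrix $-S$ rather than $S$; since condition~\eqref{eq:s2} is invariant under a global sign flip, both arguments conclude identically.
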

\begin{proof} Without loss of generality assume that $\Sys$ is balanced, having the canonical form~\eqref{eq:signA}. It is shown in~\cite[Thm.~3.2]{liu1989singular} that the model reduction error from the $r$th order singular perturbation approximation to $\Sys$ can be written as
$$\big\|{G}-{G_r^{\textrm{sp}}}\big\|_{\mathcal{H}_\infty}
=\big\|{\wh{G}}-{\wh{G}_r}\big\|_{\mathcal{H}_\infty},$$
where {$\wh{G}$ is the transfer function of }the \textit{reciprocal system} of $\Sys$, {denoted by $\wh{\Sys}$ and} given by the realization 
$$\wh{A}=A^{-1},\ \ \ \wh{b}=A^{-1}b,\ \ \ \wh{c}=cA^{-1},$$
and {$\wh{G}_r$ is the transfer function of} $\wh{\Sys}_r$, the $r$th order balanced truncation reduced model for $\wh{\Sys}$.\ \  Then~\cite[Lemma~3.1]{liu1989singular} states that the given realization of $\wh{\Sys}$ is balanced and has Gramian $\Sigma$, and so the Hankel singular values of $\wh{\Sys}$ are the same as those of the original system~$\Sys$.\ \ 
The reciprocal system obeys the same sign symmetry as the original system, that is, $\wh{A}=S\wh{A}^{@@\tr}S$ and $\wh{b}=(\wh{c}S)^{\tr}$, where $S$ is the sign matrix of $\Sys$: inverting both sides of $A=SA^{\tr}S$ shows that $\wh{A}=S\wh{A}^{@@\tr}S$; additionally we see
$$\wh{b} = A^{-1}b = A^{-1}(cS)^{\tr} = A^{-1}Sc^{\tr}=SA^{-\tr}c^{\tr}=S@\wh{c}^{@\tr}.$$
It follows that the submatrices of the reciprocal system partitioned according to \eqref{eq:partitionedbal} satisfy the same sign symmetries as in \eqref{eq:part_symm1}. Thus applying the result of Theorem 3.1 to $\wh{\Sys}$ and $\wh{\Sys}_r$, we conclude 
$$\big\|{G}-{G_r^{@\textrm{sp}}}\big\|_{\mathcal{H}_\infty}
=\big\|{\wh{G}}-{\wh{G}_r}\big\|_{\mathcal{H}_\infty}=2(\sigma_{k+1}+\dots+\sigma_q),$$
completing the proof.
\end{proof}

\section{On the sign parameters of the Hankel singular values}

\label{sec:gen_sign_symm}

Theorem~\ref{thm:main} shows that the balanced truncation $\mathcal{H}_\infty$ error bound holds with equality when the sign parameters corresponding to the truncated Hankel singular values are identical.
In this section, we show that these sign parameters (and thus the associated sign symmetry of the canonical balanced form) are determined up to a permutation by \emph{any} (not necessarily balanced) realization of a SISO system $\Sys$ that satisfies the generalized sign-symmetry condition~\eqref{eq:signA}.
In Section~\ref{sec:specialcase}, we strengthen this result for a select class of systems with \emph{arrowhead realizations}, which includes an example from power systems that motivated this study.

\subsection{Systems with generalized state-space symmetry}
Theorem~\ref{thm:kumar} dictates that every asymptotically stable and minimal linear system $\Sys$ has a \emph{balanced realization} that satisfies the sign-symmetry condition~\eqref{eq:signA}.
We next consider systems having \emph{any} realization that satisfies this sign-symmetry property. That is, we consider SISO LTI systems having a realization such that
\begin{align}
    \label{eq:gen_symm}
    A = \wh{S}A^{\tr}\wh{S} \quad \mbox{and} \quad
    b = (c\wh{S})^{\tr}, 
\end{align}
where $\wh{S}=\mbox{diag}(\wh{s}_1,\wh{s}_2,\ldots,\wh{s}_n)$ and $\wh{s}_i\in\{\pm 1\}$ for $i=1,\ldots,n$.
(This sign-symmetry condition can be viewed as a generalization of the state-space symmetric form in Definition~\ref{def:symm}.)
Note that we do not require the realization~\eqref{eq:gen_symm} to be balanced, and thus the signs that populate $\wh{S}$ are \emph{not} necessarily the sign parameters $\{s_i\}$ of $\Sys$ defined in~\eqref{eq:signA}.

Towards proving the main result of this section, we show how to access the sign parameters of a general linear system via its \emph{cross Gramian}.
The {cross Gramian} of an asymptotically stable SISO system $\Sys$ is the unique solution $\mathcal{X}$ to the Sylvester equation
\begin{align}
    \label{eq:sylveq}
    A\mathcal{X}+\mathcal{X}A+b@c=0.
\end{align}
{Since SISO systems are trivially symmetric, the system Gramians satisfy the relationship $\X^2=\mathcal{P}\mathcal{Q}$~\cite{fernando1985cross}, from which it follows that the eigenvalues of $\X$ are real.}
The cross Gramian enjoys a helpful property that will enable us to determine a system's sign parameters.
For an order-$n$ asymptotically stable and minimal system $\Sys$, denote the eigenvalues of the cross Gramian $\X$ (counted with multiplicity) by $\lambda_i$, $i=1,\ldots,n$, ordered by decreasing magnitude. 
It can be shown that the sign parameters $\{s_i\}$ of $\Sys$ obey the formula
\begin{align}
    \label{eq:hankeleigs}
    s_i=\textrm{sign}(\lambda_i), \qquad i=1,\ldots,n.
\end{align}
These facts can be obtained by combining results from~\cite{maciejowski1988balanced} with~\cite[Lemma~5.6]{Ant05}.
It can be shown that the Hankel singular values of a system $\Sys$ are the unsigned eigenvalues of its cross Gramian~\cite[Section~2.7]{maciejowski1988balanced}, so the given ordering {imposed upon $\{\lambda_i\}_{i=1}^n$} is consistent with that of the system's Hankel singular values.
Using this observation we can check if $\Sys$ obeys the hypotheses of Theorem~\ref{thm:main} by inspecting the eigenvalues of its cross Gramian.

Ultimately, we will show the following: Given any asymptotically stable and minimal system having a realization satisfying the generalized sign-symmetry condition~\eqref{eq:gen_symm}, there exists a permutation $\pi$ of $(1,2,\ldots,n)$ such that
$$s_{\pi_i}=\wh{s}_{i},\quad i=1,\ldots,n.$$
That is, the sign parameters of $\Sys$ {(and thus, the sign-symmetry of the canonical balanced form)} can be deduced from the sign-symmetry property of \emph{any} realization of $\Sys$.
The observed ordering of these sign parameters (derived from the decreasing magnitude of the Hankel singular values to which they correspond) varies. This variation follows from the fact that any realization of a system that satisfies~\eqref{eq:gen_symm} is \textit{not unique}. 
Indeed given one such realization, others can be obtained via symmetric permutations of the state space. This situation is akin to Remark~\ref{remark:bal_ordering}: if one were to observe a balanced realization of a system $\Sys$ satisfying the sign-symmetry condition~\eqref{eq:signA}, it would not be clear whether the Hankel singular values were ordered non-increasingly in that basis.
One can show that there does exist a \textit{permuted} realization in which the states are ordered corresponding to the canonical ordering of the system's sign parameters (i.e., where the permutation $\pi$ is the identity).

We are now prepared to state the main result of this section.

\begin{theorem}
    \label{thm:main2}
    Let $\Sys$ be an order-$n$ asymptotically stable and minimal SISO system, as in~\eqref{eq:system}. Suppose $\Sys$ has a (not necessarily balanced) realization $(A,@b,@c)$ satisfying the generalized sign symmetry condition~\eqref{eq:gen_symm} with signs $\wh{s}_i\in\{\pm 1\}$ for $i=1,\ldots,n$. Then there exists a permutation $\pi=(\pi_1,\pi_2,\ldots,\pi_n)$ of $(1,2,\ldots,n)$ such that
    $$s_{\pi_i}=\wh{s}_{i},\quad i=1,\ldots,n.$$
    That is, the signs $\{\wh{s}_i\}_{i=1}^n$ are a permutation of the sign parameters $\{s_i\}_{i=1}^n$ of $\Sys$.
\end{theorem}

\begin{proof}
    Consider the realization~\eqref{eq:gen_symm} of $\Sys$.\ \  Applying the sign symmetry property and multiplying~\eqref{eq:sylveq} on the left by $\wh{S}$ reveals
    \begin{align*}
       0 = \wh{S}(A\X + \X A + bc)&=\wh{S}A\X + \wh{S}\X(\wh{S}A^\tr\wh{S})+ (\wh{S}b)(\wh{S}b)^\tr\\
       &=(\wh{S}A\wh{S})(\wh{S}\X) + (\wh{S}\X)(\wh{S}A^\tr\wh{S})+ (\wh{S}b)(\wh{S}b)^\tr. 
    \end{align*}
    Note that $(\wh{S}b)(\wh{S}b)^\tr$ is symmetric positive semidefinite; since $\Sys$ is asymptotically stable and minimal, Lyapunov's theorem~\cite[sect.~6.2]{Ant05} implies that $\wh{\X} := \wh{S}\X$ is symmetric positive definite. Then, we claim that the sign pattern of $\wh{S}$ determines the signs of the eigenvalues of $\X$. To see this, note that
    $$\lambda(\X)=\lambda(\wh{S}\wh{\X})=\lambda(\wh{\X}^{1/2}\wh{S}\wh{\X}\wh{\X}^{-1/2})=\lambda(\wh{\X}^{1/2}\wh{S}\wh{\X}^{1/2}).$$
    Thus $\wh{S}$ is a congruence transformation of $\X$, and hence by Sylvester's law of inertia \cite[Prop.~6.15]{Ant05} $\wh{S}$ and $\X$ have the same number of positive and negative eigenvalues. Denote the eigenvalues of $\X$ by $\lambda_1, \ldots, \lambda_n$, ordered in decreasing magnitude.  Recalling that $s_i=\textrm{sign}(\lambda_i)$ for $i=1,\ldots,n$, 
    there must exist some permutation $\pi=(\pi_1,\pi_2,\ldots,\pi_n)$ of $(1,2,\ldots,n)$ such that $$s_{\pi_i}=\wh{s}_i,\quad i=1,\ldots,n.$$
    The permutation stems from the fact that Sylvester's law of inertia determines the signs of the eigenvalues of $\X$; it does not tell us anything about their magnitude.
\end{proof}

This result implies that the sign pattern of $\wh{S}$ reveals the sign parameters of $\Sys$.\ \  It does not however indicate their ordering with respect to the Hankel singular values.
There exists a permuted realization that reveals the canonical ordering of the sign parameters of $\Sys$ (that is, the Hankel singular values to which they correspond are nonincreasing).
Next, we study a special class of systems having an \emph{arrowhead structure} for which (in certain cases) the result of Theorem~\ref{thm:main2} can be strengthened.

\section{Systems with arrowhead realizations}
\label{sec:specialcase}
Theorem~\ref{thm:main2} shows that the sign parameters of a linear system are determined by the generalized state-space symmetry property~\eqref{eq:gen_symm}. In this section, we apply the result of Theorem~\ref{thm:main2} to systems with \emph{arrowhead realizations}.
For such systems, the sign parameters can be derived directly from the off-diagonal entries of the corresponding state matrix.
At the end of the section, we apply this result to a {particular} class of systems having an arrowhead realization in which the {diagonal entries are negative and} the signs of the products of the off-diagonal entries are identical.
In this case, the sign structure of the arrowhead realization reveals the \emph{canonical ordering} of the system's sign parameters.
We begin with an example of such a system that arises as a model of the aggregate dynamics of a power network.
(Indeed, we first discovered the phenomena characterized in Theorems~\ref{thm:main} and~\ref{thm:main2} while studying this model.)

\subsection{A motivating example from power systems modeling}
\label{ss:ps_model}
A common technique for modeling the frequency dynamics of a network of coherent generators is to aggregate the system response into a single effective machine. It is shown in~\cite{min2019accurate,min2020accurate} that for a network of $M$ coherent generators, the aggregate frequency dynamics are approximated well by an order $n=M+1$ linear system $\Sys$ having the transfer function
\begin{equation}\label{eq:powertf}
G(s) = \frac{1}{\wh{m}s + \wh{d} + \sum_{i=1}^M \frac{r_i^{-1}}{\tau_i s + 1}},
\end{equation}
for generators given by the swing model with first-order turbine control. Here $\wh{m}$ and $\wh{d}$ denote the aggregate inertia and damping coefficients of the generators in the network, while $\tau_i$ and $r_i^{-1}$ denote the time constant and droop coefficient of the $i$th generator. For the theoretical justification that $G(s)$ sufficiently approximates the network response, see~\cite[Sec.~2]{min2019accurate}.

There is a natural coordinate system for the model~\eqref{eq:powertf} in which the dynamics of $\Sys$ are described by simple expressions involving the physical parameters of the network. (This is obtained by applying~\cite[Equation~1]{min2019accurate} to the dynamics~\cite[p.~3009, Example~2]{paganini2019global} and simplifying.)
In these coordinates, the realization of $\Sys$ has an \emph{arrowhead form}:
\begin{align}
\label{eq:arrowhead_ps}
A=\begin{bmatrix}
 -\wh{d}& \sqrt{\tfrac{r_1^{-1}}{\wh{m}\tau_1}} &\dots &\sqrt{\tfrac{r_M^{-1}}{\wh{m}\tau_M}}\\[.5mm]
 -\sqrt{\tfrac{r_1^{-1}}{\wh{m}\tau_1}} & -\tfrac{1}{\tau_1} & &  \\
 \vdots & & \ddots &  \\
-\sqrt{\tfrac{r_M^{-1}}{\wh{m}\tau_M}}&  &   & -\tfrac{1}{\tau_M} 
\end{bmatrix},\ \ \ b = \frac{1}{\sqrt{\wh{m}}}e_1, \ \ \ c=\frac{1}{\sqrt{\wh{m}}}e_1^{\tr}.
\end{align}
Apart from the first row, first column, and main diagonal,
all entries of $A$ are zero.
Such an $A$ is called an \textit{arrowhead matrix}.
In this application, the parameters $\wh m$, $\wh d$, $\{r_i^{-1}\}$, and $\{\tau_i\}$ are guaranteed to be positive. Evidently, the sign pattern of~\eqref{eq:arrowhead_ps} obeys the symmetry condition~\eqref{eq:gen_symm} for $\wh{S}=\mbox{diag}(+1,-1,\ldots,-1)\in\R^{n\times n}$. (Note that this condition is entirely determined by the signs of the off-diagonal entries of $A$.)\ 
For these systems we first observed numerically that the balanced truncation error bound was consistently tight for all orders of reduction, and then noticed that the truncated system in $\Sys$'s canonical balanced form obeyed the sign pattern described in~\eqref{eq:s2}. That is, we observed numerically that $s_1=+1$ and $s_i=-1$ for all $i=2,\ldots,n=M+1$, suggesting that the permutation $\pi$ given in the statement of Theorem~\ref{thm:main2} is always the identity. 
In other words, the sign symmetry of~\eqref{eq:arrowhead_ps} reveals the sign parameters of $\Sys$ \emph{in the canonical order} according to the Hankel singular values.
Motivated by this example, we apply the result of Theorem~\ref{thm:main2} to a more general class of systems with arrowhead realizations.

\subsection{Arrowhead systems and preliminary tools}

In general, we say a SISO LTI system $\Sys$ has an \textit{arrowhead realization} if it has a realization of the form
\begin{align}
\label{eq:arrowheadrep}
   A=\begin{bmatrix}
            d_1 & {\alpha_2} &\cdots &  {\alpha_n}\\
            {\beta_2} & d_2  \\
            \vdots & & \ddots & \\
            {\beta_n} &  &   &  d_n
            \end{bmatrix},\ \ \ b = \gamma e_1, \ \ \ \textrm{and} \ \ \ c=e_1^{\tr},
\end{align}
where $\gamma\in \R$ is nonzero, and $e_i$ denotes the $i$th column of the identity matrix.
Occasionally, we refer to systems having arrowhead realizations as \textit{arrowhead systems}.
Recalling Figure~\ref{figure:arrowheadnetwork}, we emphasize that these arrowhead realizations describe a natural physical structure if we consider $A$ to be the weighted adjacency matrix of a directed graph.

Before strengthening the result of Theorem~\ref{thm:main2} for a class of arrowhead systems (including the power systems model that motivated this investigation) we present some useful preliminary facts and results concerning arrowhead systems.
We first provide a formula for the inverse of an arrowhead matrix, as it will serve as a useful tool throughout this section.
Let $\alpha=[\alpha_2\  \cdots \ \alpha_n]\in\R^{1\times (n-1)}$, 
$\beta=[\beta_2\ \cdots \ \beta_n]^{\tr}\in\R^{(n-1)\times 1}$, and $D=\textrm{diag}(d_2,\,\dots,\,d_n)\in \R^{(n-1)\times(n-1)}$.
If $A\in \R^{n\times n}$ has the arrowhead form~\eqref{eq:arrowheadrep} with $d_i\neq 0$ for all $i=2,\dots,n$, and $d_1-\alpha D^{-1}\beta \neq 0$, then the inverse of $A$ can be expressed as a diagonal matrix plus a rank-one update~\cite{salkuyeh2018explicit}:
\begin{align}
    \label{eq:aheadinv}
    A^{-1}=\begin{bmatrix} \phantom{^1}0 & 0 \\ \phantom{^1}0 & D^{-1} \\  \end{bmatrix} + \rho \begin{bmatrix} -1 \\   D^{-1}\beta \end{bmatrix}\begin{bmatrix} -1 & \alpha D^{-1} \end{bmatrix},
\end{align}
where 
$$\rho = \frac{1}{d_1 - \alpha D^{-1}\beta}.$$
(This expression follows from the Sherman--Morrison--Woodbury formula~\cite[p.~65]{GV12}; though most easily derived when $d_1\ne 0$, the formula holds even when $d_1=0$.) 
While we have stated the formula for arrowhead matrices having real entries,~\eqref{eq:aheadinv} holds when the nonzero entries of $A$ are complex-valued, as well.

Applying~\eqref{eq:aheadinv} to the transfer function $G(s)=c(sI-A)^{-1}b$ of a system having an arrowhead realization~\eqref{eq:arrowheadrep} leads to the revealing form
\begin{align}
    \label{eq:aheadtf}
    G(s) = \frac{\gamma}{s - d_1 - \sum_{i=2}^n\frac{\alpha_i\beta_i}{s-d_i}}.
\end{align}
Any system having a transfer function $G(s)$ with the general form~\eqref{eq:aheadtf} has an arrowhead realization given by~\eqref{eq:arrowheadrep}.
Conversely, given any system having an arrowhead realization~\eqref{eq:arrowheadrep}, the transfer function $G(s)$ can be expressed in the form~\eqref{eq:aheadtf}. 
At the end of this section, we use this representation to derive an arrowhead form for linear systems under mild assumptions.

Next, we prove conditions under which a given arrowhead realization~\eqref{eq:arrowheadrep} of a system is \emph{minimal} {(that is, both reachable and observable)} based on the entries of the corresponding arrowhead matrix. {This will be necessary to satisfy the hypotheses of Theorem~\ref{thm:main2}.
}
\begin{lemma}
\label{lemma:reach_obsv_arrowhead}
Let $\Sys$ be an order-$n$ asymptotically stable SISO system as in~\eqref{eq:system} with an arrowhead realization~\eqref{eq:arrowheadrep}. {The following equivalences  hold:}
\begin{enumerate}
    \item {$(A,c)$ is observable, if and only if $d_i\neq d_j$
for $i\neq j$ and $\alpha_j\neq 0$ for all $j=2,\ldots,n$.}
    \item {$(A,b)$ is reachable, if and only if $d_i\neq d_j$ for $i\neq j$ and $\beta_j\neq 0$ for all $j=2,\ldots,n$.}
\end{enumerate}
\end{lemma}
\begin{proof}
By duality, $(A,b)$ is reachable if and only if $(A^\tr,b^\tr)$ is
observable. Thus, it suffices to prove the observability
statement.
{By the Hautus test~\cite[Theorem 4.15]{Zho96}, the pair $(A,c)$ is not
observable if and only if there exists an eigenvector $v\neq 0$ of $A$
such that $c^\tr v=0$, i.e., $v_1=0$.}

We first prove the necessity of the conditions {by the constrapositive implication}.
If $\alpha_j=0$, for some $j\in\{2,\ldots,n\}$, then $v=e_j$ is an
eigenvector of $A$ with $v_1=0$. 
If $d_i=d_j$ and $\alpha_i\neq 0$, $\alpha_j\neq 0$ for $i\neq j$,
then $\alpha_je_i-\alpha_i e_j$  is an
eigenvector of $A$ with $v_1=0$. 
In both cases $(A,c)$ is not observable by the previous application of the Hautus test.

We now {prove} sufficiency of the conditions {by contradiction}. Suppose  $d_i\neq d_j$ for $i\neq j$ and $\alpha_j\neq 0$ for $j=2,\ldots,n$, {and additionally suppose that $(A,c)$ is not observable. Thus there exists an eigenvector $v\neq 0$ of $A$ such that $c^\tr v = 0$, that is $v_1=0$.}
The eigenvector condition $Av=\lambda v$ with $v_1=0$ implies $d_jv_j=\lambda v_j$ for
$j=2,\ldots,n$. Hence, $v_i\neq 0$ for exactly one
$i\in\{2,\ldots,n\}$. Without loss of generality $v_i=1$, i.e., $v=e_i$.
But then $Av={Ae_i}=\alpha_ie_1+d_ie_i$, {and so $v$} is not
an eigenvector of $A$ as $\alpha_i\neq 0$. It follows that $(A,c)$ is observable.
\end{proof}

From Lemma~\ref{lemma:reach_obsv_arrowhead}, it follows straightforwardly that an arrowhead system is minimal if and only if $\alpha_i$ and $\beta_i$ are nonzero for all $i=2,\ldots,n$ and $d_i\neq d_j$ for all $i\neq j$.
We are now in a position to apply Theorem~\ref{thm:main2} to these arrowhead systems.

\begin{corollary}
\label{cor:arrowhead_case}
Let $\Sys$ be an order-$n$ asymptotically stable and minimal SISO system as in~\eqref{eq:system} with an arrowhead realization~\eqref{eq:arrowheadrep}.
Then there exists a permutation $\pi=(\pi_1,\pi_2,\ldots,\pi_n)$ of $(1,2,\ldots,n)$ such that the sign parameters of $\Sys$ are given by
\begin{align}
\label{eq:signformula}
s_{\pi_1} = \textrm{sign}(\gamma), \quad\mbox{and}\quad s_{\pi_i} = \textrm{sign}(\gamma\alpha_i\beta_i), \quad i=2,\ldots,n.
\end{align}
\end{corollary}

\begin{proof}
    Note that any such realization~\eqref{eq:arrowheadrep} can be made to satisfy the sign symmetry condition~\eqref{eq:gen_symm} via an appropriate diagonal change of basis. In particular, let
\begin{equation}
D=\sqrt{|\gamma|}\ \mbox{diag}\left(1,\sqrt{\tfrac{|\beta_2|}{|\alpha_2|}},\ldots,
   \sqrt{\tfrac{|\beta_n|}{|\alpha_n|}}\right).
   \label{eq:D}
\end{equation}
Then it follows that the transformed system
\begin{align*}
   A_s&=D^{-1}AD=\left[
    \begin{array}{cccc}
      d_1&\mbox{sign}(\alpha_2)\sqrt{|\alpha_2\beta_2|}&\ \cdots\ &\mbox{sign}(\alpha_n)\sqrt{|\alpha_n\beta_n|}\\[6pt]
      \mbox{sign}(\beta_2) \sqrt{|\alpha_2\beta_2|}&d_2\\[3pt]
      \vdots&&\ddots\\[3pt]
      \mbox{sign}(\beta_n) \sqrt{|\alpha_n\beta_n|}&&&d_n
    \end{array}
  \right],\\[6pt]  
b_s&=D^{-1}b=\mbox{sign}(\gamma)\sqrt{|\gamma|}e_1, \quad c_s=cD=\sqrt{|\gamma|}e_1^\tr
 \end{align*}
satisfies~\eqref{eq:gen_symm}
with 
$$\wh{S} = \textrm{diag}\left(\textrm{sign}(\gamma), \  \textrm{sign}(\gamma@\alpha_2\beta_2), \ \ldots,\  \textrm{sign}(\gamma@\alpha_n\beta_n)\right),$$
thus we have $A_s = \wh{S}A_s^\tr\wh{S}$ and $b_s = (c_s \wh{S})^\tr$.
In other words, the entries of the sign matrix $\wh{S}$ above are entirely determined by the signs of $\gamma$ and the off-diagonal entries $\{\alpha_i\}$ and $\{\beta_i\}$ of the arrowhead matrix.
Since $\Sys$ is asymptotically stable and minimal by hypothesis, we can apply Theorem~\ref{thm:main2} to obtain the stated result.
\end{proof}
{Corollary~\ref{cor:arrowhead_case} is stated for systems having the particular structure in~\eqref{eq:arrowheadrep}.  While this structure differs slightly from the power system model in~\eqref{eq:arrowhead_ps}, the two models are related by the simple diagonal state space transformation~\eqref{eq:D}, and the corollary applies to both.}

\subsection{A special case of arrowhead systems}
Recall the power systems model presented in Subsection~\ref{ss:ps_model}.
For these systems we first observed numerically that the balanced truncation error bound was consistently tight, and then noticed that the truncated system in $\Sys$'s canonical balanced form obeyed the sign pattern described in~\eqref{eq:s2}.
Applying Corollary~\ref{cor:arrowhead_case} to this power system model $\Sys$ with realization~\eqref{eq:arrowhead_ps} guarantees that
\[s_{\pi_1} = \mbox{sign}\left(\tfrac{1}{\sqrt{\wh{m}}}\right)=+1,\quad s_{\pi_{i}} = \mbox{sign}\left( -\sqrt{\tfrac{r_{i-1}^{-1}}{\wh m\tau_{i-1}}} \right)=-1, \quad i =2,\ldots, M+1.\]
However, for these systems, we observed numerically that $s_1=+1$ and $s_i=-1$ for all $i=2,\ldots,n=M+1$, suggesting that the permutation $\pi$ given in the statement of Corollary~\ref{cor:arrowhead_case} is the identity.
We next show that this pattern holds in a more general setting.
Specifically, if a system has an arrowhead realization~\eqref{eq:arrowheadrep} such that {the diagonal entries as well as} the signs of the products of the off-diagonal entries are \emph{negative}, i.e., {$d_i <0$} and $\textrm{sign}(\alpha_i\beta_i)=-1$ for all $i=2,\ldots,n$, then the trailing $n-1$ sign parameters of $\Sys$ are identical.

\begin{corollary}
\label{cor:onesignflip} {Let $\Sys$ be an order-$n$ asymptotically stable and minimal SISO system as in~\eqref{eq:system} with an arrowhead realization as in~\eqref{eq:arrowheadrep}. Suppose further that the arrowhead realization~\eqref{eq:arrowheadrep} is such that $\textrm{sign}(\alpha_i\beta_i)=-1$ for all $i=2,\ldots,n$ and {$d_i < 0$ for all $i = 1,\ldots,n$.}
Then the sign parameters of $\Sys$ are given by
$$s_1 = \textrm{sign}(\gamma),\quad s_i = -\textrm{sign}(\gamma), \quad i=2,\ldots,n.$$
{Additionally, 
$\Sys$ has distinct Hankel singular values.}}
\end{corollary}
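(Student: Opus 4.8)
The plan is to read off the \emph{multiset} of sign parameters from Theorem~\ref{thm:arrowheadthm} and then pin down their ordering with a short trace identity for the cross Gramian. First, applying Theorem~\ref{thm:arrowheadthm} under the extra hypothesis $\mathrm{sign}(\alpha_i\beta_i)=-1$ produces a permutation $\pi$ with $s_{\pi_1}=\mathrm{sign}(\gamma)$ and $s_{\pi_i}=\mathrm{sign}(\gamma\alpha_i\beta_i)=-\mathrm{sign}(\gamma)$ for $i=2,\dots,n$; since $\pi$ is a permutation, $\{s_1,\dots,s_n\}$ consists of exactly one $\mathrm{sign}(\gamma)$ and $n-1$ copies of $-\mathrm{sign}(\gamma)$. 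Equivalently, by~\eqref{eq:hankeleigs}, the cross Gramian $\X$ of $\Sys$ has one eigenvalue of sign $\mathrm{sign}(\gamma)$ and $n-1$ of the opposite sign.

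Next I would establish that $\trace(\X)=-\tfrac12\,cA^{-1}b$. This follows by multiplying the Sylvester equation~\eqref{eq:sylveq} on the left by $A^{-1}$, taking the trace, and invoking cyclic invariance together with the invertibility of $A$ (guaranteed by asymptotic stability). On the other hand, the arrowhead inverse formula~\eqref{eq:aheadinv} gives $cA^{-1}b=\gamma\,[A^{-1}]_{11}=\gamma/(d_1-\alpha D^{-1}\beta)$, and since $d_i<0$ and $\alpha_i\beta_i<0$ for every $i$, we have $\alpha D^{-1}\beta=\sum_{i=2}^n \alpha_i\beta_i/d_i>0$, hence $d_1-\alpha D^{-1}\beta<0$. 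Therefore $\trace(\X)$ has the sign of $\gamma$. (When $d=0$ this is just the identity $G(0)=2\,\trace(\X)$, and $G(0)$ then inherits the sign of $\gamma$.)

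Combining these two facts: $\trace(\X)$ is the sum of the $n$ eigenvalues of $\X$, of which $n-1$ carry the sign $-\mathrm{sign}(\gamma)$ and exactly one carries $\mathrm{sign}(\gamma)$; for the sum to carry the sign of $\gamma$, the magnitude of the single $\mathrm{sign}(\gamma)$ eigenvalue must exceed the sum of the magnitudes of the remaining $n-1$, and in particular each of them. Hence that eigenvalue is the unique one of largest magnitude, so it corresponds to $\sigma_1$; this gives $s_1=\mathrm{sign}(\gamma)$, and then $s_i=-\mathrm{sign}(\gamma)$ for $i=2,\dots,n$ by the multiset count. The same strict inequality shows $\sigma_1$ exceeds every other Hankel singular value, so $\sigma_1$ has multiplicity one. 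Finally, the truncated system $\Systrunc$ obtained by retaining only $\sigma_1$ is asymptotically stable, minimal, and balanced with sign matrix $S_2=-\mathrm{sign}(\gamma)\,I_{n-1}=\pm I$ (see~\eqref{eq:part_symm2}); Proposition~\ref{prop:distincthsvs} then forces $\sigma_2,\dots,\sigma_n$ to have multiplicity one as well, so all Hankel singular values of $\Sys$ are distinct.

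The only step with real content is the trace computation --- recognizing $\trace(\X)=-\tfrac12\,cA^{-1}b$ and reading its sign off the arrowhead structure; the rest is bookkeeping on the sign multiset plus a direct application of Proposition~\ref{prop:distincthsvs}. I do not anticipate a genuine obstacle.
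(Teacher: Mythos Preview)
Your proposal is correct and follows essentially the same route as the paper: invoke Theorem~\ref{thm:arrowheadthm} to get the sign multiset, use the trace identity $2\,\trace(\X)=-cA^{-1}b$ together with the arrowhead inverse formula to show $\trace(\X)$ has the sign of $\gamma$, and conclude that the unique eigenvalue of that sign must be dominant. The only cosmetic differences are that you derive the trace identity from the Sylvester equation whereas the paper cites it from~\cite[Remark~5.4.3]{Ant05}, and you appeal to Proposition~\ref{prop:distincthsvs} directly for distinctness of $\sigma_2,\dots,\sigma_n$ whereas the paper routes this through Theorem~\ref{thm:main} (whose distinctness clause is itself just Proposition~\ref{prop:distincthsvs}).
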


\begin{proof}
Without loss of generality suppose that $\mbox{sign}(\gamma)=+1$.
If the given arrowhead realization of $\Sys$ is such that the permutation $\pi$ in Corollary~\ref{cor:arrowhead_case} is the identity, then the result follows trivially.
Otherwise, Corollary~\ref{cor:arrowhead_case} only guarantees that $\Sys$ has one positive sign parameter and $n-1$ negative sign parameters.
We claim that the positive sign parameter corresponds to the dominant eigenvalue of the cross Gramian, i.e., $s_1 = +1$. 
As noted in~\cite[Remark~5.4.3]{Ant05},  $2\,\trace(\mathcal{X})=-\trace(cA^{-1}b)$. 
Since $b=\gamma e_1$ and $c^{\tr}=e_1$, formula~\eqref{eq:aheadinv} gives 
$$cA^{-1}b=\frac{\gamma}{d_1-\sum_{i=2}^{n}\frac{\alpha_i\beta_i}{d_i}}.$$
Since this quantity is a scalar, $cA^{-1}b=\trace(cA^{-1}b)$.
{The hypothesis that $d_i<0$ for all $i=1,\ldots,n$} along with the assumption that $\mbox{sign}(\gamma)=+1$ implies
$$\trace(cA^{-1}b)=cA^{-1}b = \frac{\gamma}{d_1 - \sum_{i=2}^{n}\frac{\alpha_i\beta_i}{d_i}}<0.$$
Thus $\trace(\mathcal{X})=-\frac{1}{2}\trace(cA^{-1}b)>0$. 
Because the trace of $\mathcal{X}$ is equal to the sum of its eigenvalues and {$\mathcal{X}$ has only one positive eigenvalue}, the dominant eigenvalue of $\mathcal{X}$ must be the positive one. Thus we conclude that $s_1=+1$ and $s_i=-1,$ for all $i=2,\dots,n$. 
{Theorem~\ref{thm:main} further implies that the $n-1$ trailing Hankel singular values of $\Sys$, and thus all Hankel singular values of $\Sys$, are distinct.}
For the case of $\mbox{sign}(\gamma)=-1$, the proof follows nearly identically
by noting that $\trace(\X) = -\frac{1}{2}\trace(cA^{-1}b)<0$, which shows that $s_1 = -1$ and $s_i=+1$ for all $i=2,\ldots,n$ by the same logic as above.
\end{proof}

Corollary~\ref{cor:onesignflip} implies that the trailing sign parameters of the systems described here are consistent, and can be determined directly from the off-diagonal entries of the arrowhead matrix using the formula
$$s_1=\mbox{sign}(\gamma),\quad s_i = \mbox{sign}(\gamma\alpha_i\beta_i),\quad i=2,\ldots,n.$$
In other words, the sign parameters of systems having an arrowhead realization satisfying the hypotheses of Corollary~\ref{cor:onesignflip} exhibit a single sign flip after the first parameter, and so these systems obey the necessary sign consistency conditions in Theorem~\ref{thm:main} for \textit{all} orders of reduction: the balanced truncation $\mathcal{H}_\infty$ error bound~\eqref{eq:bound} \emph{will always be hold with equality.}

\begin{example}
\label{ex:ps}
We illustrate {Theorem~\ref{thm:main} and Corollary~\ref{cor:onesignflip}} with a particular example of the power system model presented in Subsection~\ref{ss:ps_model}.
Consider a network with $M=4$ coherent generators, giving a SISO LTI system of order $n=M+1=5$. Take $\wh{m}=0.044$, $\wh{d}=0.038$, 
$$(r_1^{-1}, r_2^{-1}, r_3^{-1}, r_4^{-1})  
= (0.013,\ 0.014,\ 0.022,\ 0.025),$$
and 
$$
(\tau_1, \tau_2, \tau_3, \tau_4) = 
(5.01,\ 6.82,\ 7.38,\ 7.79).
$$
Since the physical parameters associated with the system are all positive, Corollary~\ref{cor:onesignflip} implies, for the realization of $\Sys$ given by~\eqref{eq:arrowhead_ps}, that the sign parameters are
\[s_{1} = \mbox{sign}\left(\tfrac{1}{\sqrt{\wh{m}}}\right)=+1,\quad s_{i} = \mbox{sign}\left( -\sqrt{\tfrac{r_{i-1}^{-1}}{\wh m\tau_{i-1}}} \right)=-1, \quad i =2,\ldots, 5.\]
These signs can be verified by computing a balanced realization of $\Sys$ 
satisfying the symmetry condition~\eqref{eq:signA} with the sign matrix
\[S=\textrm{diag}\,(1,-1,-1,-1,-1).\]
The Hankel singular values of $\Sys$ with transfer function $G(s)$ in~\eqref{eq:powertf} are
\[\Sigma = \textrm{diag}(11.63, 7.13, 3.53\times10^{-2}, 8.48\times 10^{-5}, 4.12\times 10^{-8}).\]
Now we construct the canonical balanced realization of $\Sys$ using the formula~\eqref{eq:canonicalform}, then compute order-$r$ balanced truncation approximations to $\Sys$ for $r=2$, $3$, and $4$. Under these conditions the truncated systems obey the sign consistency in~\eqref{eq:s2}. 
As in Example~\ref{ex:thm1}, we highlight this symmetry by partitioning the system for $r=3$:
\begin{align*}
A&= \left[\begin{array}{rrr|rr}
-0.9913  &  0.5924 &  -0.0467  &  0.0020  &  0.0000\\
   -0.5924 & -0.0216  &  0.0087 & -0.0004 &  -0.0000\\
    0.0467 &   0.0087 &  -0.1800  &  0.0157&  0.0003\\\hline
   -0.0020 &  -0.0004  &  0.0157  & \hot{-0.1437} & \hot{-0.0062}\\
   -0.0000 &  -0.0000  &  0.0003  & \hot{-0.0062} &  \hot{-0.1372}
\end{array}\right],\\
b &= \left[
\begin{array}{r}-4.8009\\
   -0.5552\\
    0.1126\\\hline
   \hot{-0.0049}\\
   \hot{-0.0001} \end{array}\right],\ \ \
   c^{\tr} =  \left[\begin{array}{r}-4.8021 \\
   0.5552 \\
    -0.1126 \\\hline
   \hot{0.0049} \\
   \hot{0.0001} \end{array}\right],\ \ \ d=0.
\end{align*} 
Table~\ref{tab:powersystem_errors} compares the $\mathcal{H}_\infty$ norm of the error system to the balanced truncation upper bound~\eqref{eq:bound}.  Because the trailing sign parameters of $\Sys$ are all $-1$, we can perform truncation at any order $r\geq 1$ and the truncated system will satisfy the sign requirements~\eqref{eq:s2} of Theorem~\ref{thm:main}. Thus the balanced truncation error bound  holds with equality for approximations of all orders.

\begin{table}[hh]
    \caption{\label{tab:powersystem_errors}
     $\mathcal{H}_\infty$ norm of the error system, compared to the balanced truncation upper bound~\eqref{eq:bound} for a power system, {$M=4$}.}
    \centering
    \vspace{4mm}
    \begin{tabular}{c|cc}
    & $\|{G}-{G}_r\|_{\mathcal{H}_\infty}$ & $2(\sigma_{r+1}+\dots+\sigma_n)$\\[2pt] \hline 
    \rule[-4pt]{0pt}{12pt}
    $r=1$ & $1.747\times 10^{1\phantom{-}}$ & $1.747\times 10^{1\phantom{-}}$\\
   \rule[-4pt]{0pt}{12pt}
         $r=2$ & $7.067 \times 10^{-2}$ & $7.067 \times 10^{-2}$\\
     \rule[-4pt]{0pt}{12pt}
           $r=3$ & $1.697 \times 10^{-4}$ & $1.697 \times 10^{-4}$\\
      \rule[-4pt]{0pt}{12pt} 
           $r=4$ &  $8.248\times 10^{-8}$ &  $8.248\times 10^{-8}$\\
        \hline
    \end{tabular}
\end{table}
\end{example}

To motivate our study of arrowhead systems in the general setting, we derive an arrowhead form for linear systems having distinct zeros.


\begin{proposition}
\label{prop:aheadform}
Let $\Sys$ be an order-$n$ asymptotically stable and minimal SISO system as in~\eqref{eq:system} such that the associated transfer function $G(s)$ has $n-1$ distinct zeros, $d_2,\ldots,d_n\in\C$.\ \  Then $\Sys$ has an arrowhead realization of the form
$$A= \begin{bmatrix}
    -{\gamma}d_1 & {\gamma}\alpha\\
    -\mathbf{1} & D
    \end{bmatrix},\quad b = {\gamma}e_1 \quad \mbox{and} \quad  c=e_1^{\tr},$$
where $D=\textrm{diag}(d_2,\,\ldots,d_n)\in \mathbb{C}^{(n-1)\times(n-1)}$, $\mathbf{1}\in\R^{n-1}$ is the vector containing all ones, $\alpha=[\alpha_2\,\cdots \alpha_n]\in \C^{1\times(n-1)}$ with $\alpha_i\neq 0$ for all $i=2,\ldots,n$, and $\gamma\in \C$ is nonzero.
\end{proposition}

\begin{proof}
This derivation of the arrowhead form follows transfer function manipulations for power systems from~\cite[Section~III.C]{min2020accurate}, but applied here to a broader class of systems.
Let $G(s)=N(s)/D(s)$.  By minimality of $\Sys$ and the assumption that $G(s)$ has $n-1$ distinct zeros, $N(s)$ and $D(s)$ are polynomials of degree $n-1$ and $n$, respectively, and have no common factors.
By the polynomial division algorithm,
$D(s)=N(s)Q(s) + R(s)$, where $Q$ is a degree-1 polynomial and  $0<\mbox{deg}(R)<n$. (Note that $\mbox{deg}(R)=0$ would contradict the minimality of $\Sys$, as it would imply that $N(s)$ and $D(s)$ have a common factor.)
Write $Q(s) = \mu s+d_1$ for some (possibly zero) $d_1\in \mathbb{C}$ {and nonzero $\mu\in\C$}. 
Since the zeros $d_2,\ldots, d_n$ of $G(s)$, and hence of $N(s)$, are distinct, the partial fraction expansion of the rational function $R(s)/N(s)$ takes the form
\[ R(s)/N(s) = \sum_{i=2}^n \frac{\alpha_i}{s-d_i}.\]
We can thus write
\begin{align*}
    G(s)=\frac{N(s)}{D(s)}&=\frac{N(s)}{N(s)Q(s)+R(s)}\\
    &=\frac{1}{{\mu}s+d_1 + R(s)/N(s)}\\
    &=\frac{1}{{\mu}s+d_1 + \sum_{i=2}^{n}\frac{\alpha_i}{s-d_{i}}}\\
    &=\frac{{\gamma}}{s+{\gamma}d_1 +{\gamma}\sum_{i=2}^{n}\frac{\alpha_i}{s-d_{i}}},
\end{align*} 
{by dividing out $\mu$ and defining $\gamma = 1/\mu$.} 
Define
$$A= \begin{bmatrix}
    -{\gamma}d_1 & {\gamma}\alpha\\
    -\mathbf{1} & D
    \end{bmatrix},\quad b = e_1 \quad \mbox{and} \quad  c=e_1^\tr,$$
where $D=\textrm{diag}(d_2,\dots,d_n)\in \mathbb{C}^{(n-1)\times(n-1)}$ and $\alpha=[\alpha_2\,\cdots\,\alpha_n]\in\mathbb{C}^{1\times(n-1)}$. Using the formula~\eqref{eq:aheadinv} to get the $(1,1)$ entry of the inverse of an arrowhead matrix, one can show that $c@(sI-A)^{-1}b = N(s)/D(s) = G(s)$, proving that $(A,b,c)$ is a valid realization of $\mathcal{G}$.\ \ 
The fact that $\alpha_i\neq 0$ for all $i=2,\ldots,n$ follows necessarily from Lemma~\ref{lemma:reach_obsv_arrowhead} and the minimality assumption.
\end{proof}


\section{Conclusion}\label{sec:conc}
We have shown that the balanced truncation error bound~\eqref{eq:bound} holds with equality for SISO systems satisfying the sign consistency condition~\eqref{eq:s2}, providing an explicit formula for the approximation error in terms of the system's Hankel singular values.
This analysis generalizes an earlier result for state-space symmetric systems.
We additionally proved that the sign parameters corresponding to a system's Hankel singular values are determined by the generalized state-space symmetry property~\eqref{eq:gen_symm} of the system.
We showed how this result can be strengthened for a special class of arrowhead systems, illustrated by a model of coherent generators in power systems.
From these results, one can verify whether the sign consistency condition~\eqref{eq:s2} in Theorem~\ref{thm:main} holds, and thus whether or not the corresponding order-$r$ balanced truncation approximation achieves the $\mathcal{H}_\infty$ error bound~\eqref{eq:bound}.






\section*{Acknowledgments}

The authors thank Christopher Beattie and Vassilis Kekatos for helpful discussions, Christian Himpe bringing reference~\cite{6949058} to our attention, and the referees for numerous comments that improved our presentation.

\section*{Declarations}

\subsection*{Funding}
This work was supported by US National Science Foundation grant AMPS-1923221.

\subsection*{Conflict of interest/Competing interests}

The authors declare that they have no conflict of interest.

\addcontentsline{toc}{section}{References}
\bibliographystyle{plainurl}
\bibliography{references}

\begin{thebibliography}{10}

\bibitem{Ant05}
A.~C. Antoulas.
\newblock {\em Approximation of Large-Scale Dynamical Systems}.
\newblock SIAM, Philadelphia, 2005.

\bibitem{enns1984model}
Dale~F. Enns.
\newblock Model reduction with balanced realizations: An error bound and a frequency weighted generalization.
\newblock In {\em 23rd IEEE Conference on Decision and Control}, pages 127--132, Las Vegas, NV, 1984.

\bibitem{fernando1985cross}
K~Fernando and H~Nicholson.
\newblock On the cross-{Gramian} for symmetric {MIMO} systems.
\newblock {\em IEEE Trans. Circuits Syst.}, 32(5):487--489, 1985.

\bibitem{GV12}
Gene~H. Golub and Charles~F. Van~Loan.
\newblock {\em Matrix Computations}.
\newblock Johns Hopkins University Press, Baltimore, fourth edition, 2012.

\bibitem{LST98}
W.~Q. Liu, V.~Sreeram, and K.~L. Teo.
\newblock Model reduction for state-space symmetric systems.
\newblock {\em Systems Control Lett.}, 34:209--215, 1998.

\bibitem{liu1989singular}
Yi~Liu and Brian D.~O. Anderson.
\newblock Singular perturbation approximation of balanced systems.
\newblock {\em Int. J. Control}, 50(4):1379--1405, 1989.

\bibitem{lu2002inverses}
Tzon-Tzer Lu and Sheng-Hua Shiou.
\newblock Inverses of 2$\times$ 2 block matrices.
\newblock {\em Comput.\ Math.\ with Appl.}, 43(1--2):119--129, 2002.

\bibitem{maciejowski1988balanced}
J.~M. Maciejowski and R.~J. Ober.
\newblock Balanced parametrizations and canonical forms for system identification.
\newblock {\em IFAC Proceedings}, 21(9):701--708, 1988.

\bibitem{min2020accurate}
Hancheng Min, Fernando Paganini, and Enrique Mallada.
\newblock Accurate reduced-order models for heterogeneous coherent generators.
\newblock {\em IEEE Control Systems Letters}, 5(5):1741--1746, 2020.

\bibitem{min2019accurate}
Hancheng Min, Fetnando Paganini, and Enrique Mallada.
\newblock Accurate reduced order models for coherent synchronous generators.
\newblock In {\em 57th Annual Allerton Conference on Communication, Control, and Computing}, pages 316--317, Monticello, IL, 2019.

\bibitem{moore1981principal}
Bruce Moore.
\newblock Principal component analysis in linear systems: Controllability, observability, and model reduction.
\newblock {\em IEEE Trans.\ Auto.\ Control}, 26(1):17--32, 1981.

\bibitem{mullis1976synthesis}
C.~Mullis and R.~A. Roberts.
\newblock Synthesis of minimum roundoff noise fixed point digital filters.
\newblock {\em IEEE Trans. Circuits Syst.}, 23(9):551--562, 1976.

\bibitem{ober1989balanced}
Raimund Ober and Duncan McFarlane.
\newblock Balanced canonical forms for minimal systems: A normalized coprime factor approach.
\newblock {\em Linear Algebra Appl.}, 122:23--64, 1989.

\bibitem{ober1987balanced}
Raimund~J. Ober.
\newblock Balanced realizations: canonical form, parametrization, model reduction.
\newblock {\em Int. J. Control}, 46(2):643--670, 1987.

\bibitem{6949058}
M.~R. {Opmeer} and T.~{Reis}.
\newblock A lower bound for the balanced truncation error for {MIMO} systems.
\newblock {\em IEEE Trans. Auto. Control}, 60(8):2207--2212, 2015.

\bibitem{paganini2019global}
Fernando Paganini and Enrique Mallada.
\newblock Global analysis of synchronization performance for power systems: Bridging the theory-practice gap.
\newblock {\em IEEE Trans.\ Auto.\ Control}, 65(7):3007--3022, 2020.

\bibitem{PerS82}
L.~{Pernebo} and L.~{Silverman}.
\newblock Model reduction via balanced state space representations.
\newblock {\em IEEE Trans.\ Auto.\ Control}, 27(2):382--387, 1982.

\bibitem{salkuyeh2018explicit}
Davod~Khojasteh Salkuyeh and Fatemeh Panjeh~Ali Beik.
\newblock An explicit formula for the inverse of arrowhead and doubly arrow matrices.
\newblock {\em Int. J.~Appl. Comput. Math.}, 4(3):1--8, 2018.

\bibitem{wilson1983symmetry}
D~Wilson and Alok Kumar.
\newblock Symmetry properties of balanced systems.
\newblock {\em IEEE Trans.\ Auto.\ Control}, 28(9):927--929, 1983.

\bibitem{Zho96}
K.~Zhou, J.~C. Doyle, and K.~Glover.
\newblock {\em Robust and Optimal Control}.
\newblock Prentice Hall, Upper Saddle River, NJ, 1996.

\end{thebibliography}


\end{document}